\title{Multiscale Invariants of Floquet Topological Insulators}
\author{Guillaume Bal$^\dagger$}
\thanks{$^\dagger$Departments of Statistics and Mathematics and CCAM, University of Chicago, Chicago, IL.\\ {\em Email address:} \texttt{guillaumebal@uchicago.edu}}
\author{Daniel Massatt$^\ddagger$}
\thanks{$^\ddagger$Department of Statistics and CCAM, University of Chicago, Chicago, IL.
\\ {\em Email address:} \texttt{dmassatt@uchicago.edu}}
\numberwithin{equation}{section}
\begin{document}
\maketitle

\newcommand{\dm}[1]{{\color{blue} #1}}
\newcommand{\gb}[1]{{\color{magenta} #1}}

\def\Xint#1{\mathchoice
{\XXint\displaystyle\textstyle{#1}}%
{\XXint\textstyle\scriptstyle{#1}}%
{\XXint\scriptstyle\scriptscriptstyle{#1}}%
{\XXint\scriptscriptstyle\scriptscriptstyle{#1}}%
\!\int}
\def\XXint#1#2#3{{\setbox0=\hbox{$#1{#2#3}{\int}$ }
\vcenter{\hbox{$#2#3$ }}\kern-.6\wd0}}
\def\mint{\Xint-}

 \newtheorem{assumption}{Assumption}
\newtheorem{remark}{Remark}
\newtheorem{prop}{Proposition}
\newtheorem{thm}{Theorem}
\newtheorem{lemma}{Lemma}
\newtheorem{definition}{Definition}
\newtheorem{corollary}{Corollary}
\newtheorem{example}{Example}
\numberwithin{definition}{section}
\numberwithin{thm}{section}
\numberwithin{remark}{section}
\numberwithin{prop}{section}
\numberwithin{corollary}{section}
\numberwithin{assumption}{section}
\numberwithin{lemma}{section}

\newcommand{\Imag}{\text{Im}}
\newcommand{\Real}{\text{Re}}
\newcommand{\brak}[1]{ \langle #1 \rangle}
\newcommand{\sgn}{\text{sign}}

\newcommand{\h}{\mathfrak{H}}
\newcommand{\Hone}{\mathcal{H}^1}
\newcommand{\Honeloc}{\mathcal{H}^1_{\text{loc}}}
\newcommand{\wlim}{\text{w-}\hspace{-2mm}\lim}
\newcommand{\C}{\mathcal{C}}
\newcommand{\Tr}{{\rm Tr}}
\newcommand{\F}{\mathcal{F}}
\newcommand{\Hh}{\widehat{H}}
\newcommand{\Ha}{\widehat{H}_k^\text{a}}
\newcommand{\Hn}{\widehat{H}_k^{a,n}}
\newcommand{\Hsmall}{\widehat{H}_k^{aa}}

\newcommand{\A}{\mathcal{A}}
\newcommand{\per}{\text{per}}
\newcommand{\quasi}{\mathcal{S}}
\newcommand{\quasiop}{\mathcal{S}_{\text{op}}}

\newcommand{\Ast}{\text{st}}
\newcommand{\Ath}{\text{th}}
\newcommand{\nmod}{\text{mod}}
\newcommand{\V}{\mathcal{V}}
\newcommand{\Hm}{\mathcal{H}}
\newcommand{\Ln}{\mathcal{L}}
\newcommand{\op}{\text{op}}
\newcommand{\Path}{\mathcal{P}}
\newcommand{\coupling}{\mathfrak{U}}
\newcommand{\gap}{\text{gap}}

\newcommand{\Vm}{\mathcal{V}}

\newcommand{\Hilbert}{\mathcal{H}}
\newcommand{\Bounded}{\mathcal{B}(\Hilbert)}
\newcommand{\HilbertF}{\widehat{\Hilbert}}
\newcommand{\bulkH}{\widehat{H}_{\text{bulk}}}
\newcommand{\Range}{\mathfrak{R}}
\newcommand{\Domain}{\mathcal{D}}
\newcommand{\Hhat}{\widehat{H}}
\newcommand{\xt}{ {\tilde x} }
\newcommand{\yt}{ {\tilde y} }
\newcommand{\supp}{\text{supp}}
\newcommand{\egap}{E_0} 
\newcommand{\x}{\langle x \rangle}
\newcommand{\Wt}{W}                 
\newcommand{\Rt}{\mathcal{R}}       
\newcommand{\Ihalf}{I_{1/2}}
\newcommand{\Err}{\text{Err}}
\newcommand{\Weyl}{\text{Op}}
\newcommand{\htwo}{\mathfrak{h}}

\newcommand{\pdr}[2]{\dfrac{\partial{#1}}{\partial{#2}}}
\newcommand{\pdrr}[2]{\dfrac{\partial^2{#1}}{\partial{#2}^2}}
\newcommand{\pdrt}[3]{\dfrac{\partial^2{#1}}{\partial{#2}{\partial{#3}}}}
\newcommand{\dr}[2]{\dfrac{d{#1}}{d{#2}}}
\newcommand{\drr}[2]{\dfrac{d^2{#1}}{d{#2}^2}}
\newcommand{\aver}[1]{\langle {#1} \rangle}
\newcommand{\Baver}[1]{\Big\langle {#1} \Big\rangle}
\newcommand{\mH}{{\mathcal H}}
\newcommand{\Ht}{\widehat{H}}
\newcommand{\Nt}{\widetilde{N}}
\newcommand{\Bt}{\widehat{B}}
\newcommand{\Ut}{\widetilde{U}}
\newcommand{\Fourier}{\mathcal{F}}

\newcommand{\emb}{\mathcal{E}}

\DeclarePairedDelimiter\ceil{\lceil}{\rceil}
\DeclarePairedDelimiter\floor{\lfloor}{\rfloor}

\def \Rm {\mathbb R}
\def \Nm {\mathbb N}
\def \Cm {\mathbb C}
\def \Zm {\mathbb Z}
\def \Sm {\mathbb S}
\def \Tm {\mathbb T}
\def \Mm {\mathbb M}
\newcommand{\eps}{\varepsilon}
\newcommand{\E}{\mathbb E}
\newcommand{\dsum}{\displaystyle\sum}
\newcommand{\dint}{\displaystyle\int}

\begin{abstract}
    This paper analyzes Floquet topological insulators resulting from the time-harmonic irradiation of electromagnetic waves on two dimensional materials such as graphene. We analyze the bulk and edge topologies of approximations to the evolution of the light-matter interaction. Topologically protected interface states are created by spatial modulations of the  drive polarization across an interface. In the high-frequency modulation regime, we obtain a sequence of topologies that apply to different time scales. Bulk-difference invariants are computed in detail and a bulk-interface correspondence is shown to apply. We also analyze a high-frequency high-amplitude modulation resulting in a large-gap effective topology topologically that remains valid only for moderately long times.
\end{abstract}

\section{Introduction}

The field of topological insulators finds important applications in two-dimensional materials as they display transport properties that are in some sense immune to perturbations and imperfections. In particular, conductivity at the interface between two insulators in different topologies takes quantized and non-vanishing values directly related to the topology of the insulators. We refer to, e.g., \cite{Hughes_2013, volovik2003universe, fruchart_2013} as well as their large literature on this well-studied phenomenon. 

The simplest partial differential model allowing us to analyze such a phenomenon is the following Dirac Hamiltonian in two space dimensions
\begin{equation}\label{eq:H2x2}
  H = D\cdot\sigma + m(y) \sigma_3
\end{equation}
with $\sigma=(\sigma_1,\sigma_2)^t$ and $\sigma_{1,2,3}$ the standard Pauli matices and $D=\frac1i(\partial_x,\partial_y)^t$. These operators are used for example in two-band models as an approximation near a gap transition point, or in graphene \cite{Hughes_2013}. The mass term $m(y)$ is a smooth function with prescribed signs as $y\to\pm\infty$ and such that $|m(y)|=m_0>0$ for $|y|\geq y_0>0$, say. The Hamiltonian $H$ is an unbounded self-adjoint operator on $L^2(\Rm^2;\Cm^2)$ for an appropriate domain of definition \cite{bal2}.

The main quantity describing quantized transport along the edge ($y$ close to $0$) is given by the following interface conductivity
\begin{equation}\label{eq:sigmaI}
  \sigma_I = {\rm Tr}\ i[H,P]\varphi'(H).
\end{equation}
The function $0\leq\varphi(h)\leq1$ is defined such that $\varphi(-m_0)=0$, $\varphi(m_0)=1$ and $0\leq\varphi'(h)\in C^\infty_0(-m_0,m_0)$. The term $\varphi'(H)$ thus corresponds to a density of states that are defined within the bulk gap $(-m_0,m_0)$. The spatial function $0\leq P(x)\leq 1$ is a smooth function with $P(x)=0$ for $x\leq x_0-1$ and $P(x)=1$ for $x\geq x_0+1$ for some $x_0\in\Rm$. In the limit $P(x)=H(x-x_0)$, equal to $1$ for $x>x_0$ and $0$ for $x<x_0$, this corresponds to an observable counting the energy density in the interval $x>x_0$. The evolution of this observable is given by $i[H,P]$ in the Heisenberg formalism so that $\sigma_I$ may be interpreted as the rate of charge passing from $x<x_0$ to $x>x_0$ and hence as an interface conductivity. The (operator) trace of that observable against the density of states $\varphi'(H)$ gives the above formula. That the $\sigma_I$ is indeed defined as a trace for $H$ defined in \eqref{eq:H2x2} is justified in \cite{bal2}. Calculations in that reference show that
\begin{equation}\label{eq:valsig}
  2\pi \sigma_I = -\frac12 \big(\sgn(m(y_0))-\sgn(m(-y_0))\big).
\end{equation}
This is a non-negative integer (equal to $\pm1$) when $m(y)$ changes sign across the interface while it vanishes (topologically trivial case) when $m$ has a constant sign at infinity. 

The calculations for the above model are given in \cite{bal2} while \cite{bal3} relates this invariant to a general Fedosov-H\"ormander index that may be computed from the symbol of the above Hamiltonian $H$. This relation also helps us to prove the bulk-interface correspondence and show that the above conductivity may be written as a bulk-difference index constructed from Hamiltonians of the form of $H$ above with $m$ constant \cite{bal3}. 

The main feature of the above index (or the above conductivity) is that it is invariant with respect to a large class of perturbations. For instance, $H_V=H+V$ with $V$ a compactly supported local multiplication operator has the same conductivity as $H$: $\sigma_I[H]=\sigma_I[H+V]$. It is this stability with respect to perturbations and heterogeneities that makes two-dimensional topological insulators potentially useful practically.

\medskip

What drives the above non-trivial topology is the presence of a gap-opening mechanism resulting in $m\not=0$. Indeed, when $m$ is constant, we observe that $H^2=(-\Delta+m^2)I$ so that $H$ has (an absolutely continuous) spectrum given by $\Rm\backslash(-m,m)$. It is difficult to find materials with sufficiently large gaps in practice. One possible method to generate such gaps is to perturb a gapless material by electromagnetic modulation. Starting from a model for graphene ($m=0$ above) and modeling the electromagnetic influence linearly, we obtain the following time-dependent Hamiltonian
\begin{equation}\label{eq:EMH}
  H(t) = (D+A(t)) \cdot\sigma +V(t)
\end{equation}
with $A(t)$ a magnetic potential and $V(t)$ a scalar electric potential.

The above Hamiltonian models the light-matter interaction in the vicinity of a given Dirac point. In the presence of several such points (there are two in standard descriptions of graphene), then a separate analysis needs to be performed at each one of them. The global topological invariants of the problem are then given as the respective sums over all such points. In the case of graphene with a similar influence of the electromagnetic modulation on each Dirac point, the topologies obtained in this paper need to be multiplied by a factor $2$ as in \cite{Perez_2015}. 

We consider $A(t)$ and $V(t)$ time-periodic with large frequency $\Omega$. Note that at any time $t$, $H(t)$ is topologically trivial as no term (in front of $\sigma_3$) is present to open a gap. However, in the regime of fast temporal fluctuations, we expect an effective medium to adequately represent the evolution associated to $H(t)$, hopefully with a gap opening. When the potentials are allowed to depend on $y$, we also expect the corresponding mass term to display sign changes and result in a non-vanishing interface conductivity.

\medskip

The main objective of this paper is to show that this favorable picture holds only in a restricted sense. What we show instead is that approximations to $H(t)$ with different levels of accuracy give rise to different topologies. Moreover, the above topological picture, with a Hamiltonian on $\Cm^2$ with a sign-changing mass term $m(y)$, only appears as an approximation over not-too-large times and for sufficiently smooth initial conditions. 


Topological insulators involving Hamiltonians with time-periodic coefficients are broadly referred to as Floquet topological insulator (FTI) \cite{fruchart_2016,rudner2019,kitagawa2010}. Floquet theory is a general framework applying to differential equations with periodic coefficient. The terminology of FTI is used in two fairly different situations. A large class of theoretical results exist in what we will call an "adiabatic" regime. Hamiltonians in the Fourier domain are parametrized by a three-dimensional domain $(k_x,k_y,t)$, for instance a three-dimensional torus on which the topological invariants are defined, typically as a three-dimensional winding number \cite{fruchart_2016,rudner2013,sadel2017topological,tauber2018effective}. Such invariants are invariant under rescaling $t\to\lambda t$ and hence the terminology of an adiabatic regime. As we mentioned above, we are interested in the non-adiabatic regime of very rapid temporal oscillations with Hamiltonians $H(t)$ at each $t$ displaying no gap opening and hence no adiabatic (non-trivial) topology. In fact, different levels of approximation give rise to different topologies and we cannot assign any three dimensional topology.  Rather, we try and understand how the evolution of our time-dependent Hamiltonian is approximated by systems that do display non-trivial topologies.


In most of the paper, and following \cite{Perez_2014, Perez_2015}, we analyze a specific model with $A(t)=(\cos(\Omega t),m(y)\sin(\Omega t))$ and $V=0$. In section \ref{sec:hf}, we approximate the evolution of such a Hamiltonian by means of replica models of arbitrary accuracy (over times small compared to $\Omega$ in appropriately rescaled units). These n-replica models take the form of $2(2n+1)\times 2(2n+1)$ systems of equations. A further approximation to the 1-replica model gives rise to a Hamiltonian of the form $H$ above, but only under the assumption that the initial condition is sufficiently smooth. 

The analysis of the topological properties of the n-replica models is given in section \ref{sec:ti}. We use the bulk-interface correspondence derived in \cite{bal3} to relate the interface conductivities to bulk-difference invariants. The long and somewhat intricate computation of the bulk-difference invariants is also presented in detail. 


We conclude the paper in section \ref{sec:at} with a different method of approximation based on an averaging theory. We assume there that the potentials $(A_1(t),V(t,y))$ are highly oscillatory and large while $A_2\equiv0$. The resulting effect of their combination is the opening of a gap of order $O(1)$, unlike the small $O(\Omega^{-1})$ gap observed in the previous sections. We then show that the evolution of the full operator is well approximated by that of the topologically non-trivial effective Hamiltonian over times short compared to $\Omega$ and, as for the preceding model, for initial conditions that are sufficiently regular.

\section{High-frequency approximation theory}
\label{sec:hf}
This section considers a simple model of a high frequency laser driven graphene system. We discuss the Floquet formalism using the replica system, and demonstrate that there is a sequence of simpler Hamiltonians that well describe the electronic structure at varying time scales. We do this by constructing approximations to the evolution operator, and show they well approximate the original evolution operator. 

We will introduce two different approximations for evolution. The first one is based on a Duhamel (Dyson) expansion and generic for all states.
The second approximation only applies to states that are either sufficiently smooth or that live in a small energy window of an approximation's spectrum. With the evolution approximation, we will formally justify the current models used in Section \ref{sec:ti}. 


Following models considered in the physical literature \cite{lindner2011floquet,Perez_2015}, 
we define the laser driven graphene system with an edge as follows:
\begin{align}
\label{e:hamiltonian1}
        &i \partial_t \psi = \widetilde H(t)\psi \\
        \label{e:hamiltonian2}
        & \widetilde H(t) = (D + A(t)) \cdot \sigma \\
        & A(t) = (\cos(\Omega t),m(y)\sin(\Omega t)).
\end{align}
Here, $m(y)$ is a smooth function such that $m(y) = \text{sign}(y)$ for $|y| \geq y_0 > 0$, say. 
We are interested in $\Omega$ large in this paper in the sense that the laser period is small compared to durations of interest. 
We produce an interface between insulators in two different topologies by assuming the laser is circularly polarized for $y > 0$ and oppositely polarized for $y < 0$ producing a direction of current flow along the edge (in the vicinity of $y=0$).
We find 
\begin{equation*}
A(t) \cdot \sigma = \widetilde B(y) e^{-i\Omega t} + \widetilde B^*(y) e^{i\Omega t}
\end{equation*}
for $\widetilde B(y) = \frac{1}{2}(1+m(y))B_0+\frac{1}{2}(1-m(y))B^*_0$, $B_0 = \begin{pmatrix} 0 & 1 \\ 0 & 0 \end{pmatrix}.$ Then \eqref{e:hamiltonian2} is rewritten as
\begin{equation}
    \widetilde H(t) = D \cdot \sigma + \widetilde B(y)e^{-i\Omega t} + \widetilde B^*(y)e^{i\Omega t}.
\end{equation}
In the high frequency regime $\varepsilon = 1/\Omega\ll1$, we find it convenient to rescale the above problem as follows.
We let
\begin{align} \label{eq:rescaling}
    & \tau=t\Omega=\frac1\eps t,\quad (x,y) \rightarrow \eps(x,y)=\frac1\Omega(x,y),\quad B(y) := \widetilde B(\eps y).
\end{align}
We then have the rescaled system:
\begin{align}
    & i \partial_\tau \psi = H(\tau)\psi \\
    & H(\tau) = D\cdot \sigma + \varepsilon\biggl(B(y)e^{-i\tau} + B^*(y)e^{i\tau}\biggr).
\end{align}
Note that $m(y)$ is independent of frequency $\Omega=\eps^{-1}$ so that $B(y)=B_\eps(y)=\tilde B(\eps y)$ depends on $\eps$ in the rescaled Hamiltonian.

 By studying the evolution operator, we show that we can write an approximation for the evolution operator out of a Hamiltonian corresponding to one of these sequential approximations via the Fourier replica model \cite{Perez_2015} that controls the evolution at different time scales. 

This section does not use the edge structure (change of sign of $m(y)$) or even the independence of the coefficients in $x$. All results in the section apply with $B$ a general bounded operator defined over $\Rm^2$ for the replica model approximation in section \ref{sec:rm} and for $B$ with sufficiently smooth coefficients in the approximation by a $2\times2$ system in section \ref{sec:2x2}. 

\subsection{Time evolution}
We start by defining the replica model and the corresponding evolution operator of $H(\tau)$.
We define our Hilbert space of electronic states $\Hilbert := L^2(\mathbb{R}^2;\mathbb{C}^2)$ with the standard $L^2$ norm $\|\cdot\|$. 
The time evolution operator is the operator $U(\tau)$ satisfying
\begin{align}
    & i \partial_\tau U(\tau) = H(\tau) U(\tau)\label{eq:Utau} \\
    & U(0) = I. \nonumber
\end{align}
 
A useful formalism to analyze such problems consists in doubling the number of time variables and introducing \cite{howland,peskin,sambe}
\[
  H_e (t)= H(t)-i\partial_t
\]
with an extended solution given by
\[
  \tilde\psi(x,t,\tau)=e^{-i H_e (t) \tau}\tilde\psi(x,t,0).
\]
The solution to the original problem is then given by $\psi(x,\tau)=\tilde\psi(x,t,\tau)_{|t=\tau}$. We also observe that for $H_e\varphi=\lambda\varphi$, then $\tilde\psi(x,t,\tau)=e^{-i\lambda \tau}\varphi(x,t)$ gives us a solution, $\psi(x,\tau) = e^{-i\lambda \tau}\varphi(x,\tau)$. $\varphi$ is $2\pi$ periodic in time, as can be seen by a simple Fourier transform of the eigenproblem.
We will then take advantage of the periodicity of the driving laser to write a Fourier representation of $H(\tau)$ over the Hilbert space
$$\HilbertF := L^2(\mathbb{R}^2; \mathbb{Z}\otimes\mathbb{C}^2),
$$ 
which we call $\Ht$ \cite{rudner2013}. We define $\Ht$ over $\HilbertF$ by 
\begin{equation}
    [\Ht\hat \varphi]_n = (n + D \cdot \sigma)\varphi_n + \varepsilon B(y)\varphi_{n+1} + \varepsilon B^*(y)\varphi_{n-1},
\end{equation}
where we used $\hat \varphi = (\cdots, \varphi_{1},\varphi_0,\varphi_{-1},\cdots)^t \in \HilbertF$ as a representation of the Fourier modes, $\varphi(x,t) = \sum_\ell \varphi_\ell(x) e^{it\ell}$. We then verify that 
\[
  \Ht_{m,n} = \dfrac{1}{2\pi}\dint_0^{2\pi} e^{-imt}  H_e(t) e^{int}dt
\]
and that
\begin{equation*}
   H_e(t)\varphi(x,t) = \sum_\ell e^{imt} \Ht_{m,m+\ell}\varphi_\ell(x)\qquad \forall m\in\Zm.
\end{equation*}
In other words, $\Ht$ is the Fourier transform of the extended operator $H_e$ restricted to periodic functions. When $H(\tau)$ is independent of time, then the spectrum of $H_e$ or $\Ht$ is nothing but the union of the shifted copies of that of $H$ by any integer. When $\Ht$ is not block diagonal (for us, when $\eps\not=0$), then these shifted copies interact and develop a more complex spectrum.

For $O$ an arbitrary operator over $\HilbertF$, we denote the operator $[O]_{nm} : \Hilbert \rightarrow \Hilbert$ by $[O]_{nm}\psi = [O (e_m \otimes \psi)]_n$ for $[e_m]_n = \delta_{nm}$ for $n,m \in \mathbb{Z}$ (the standard basis). We find the evolution operator is:
\begin{equation*}
    U(\tau) := \sum_k [e^{-i\tau\Ht}]_{0k}.
\end{equation*}
Observe that $[e^{-i\tau\Ht}]_{mk} = e^{i\ell \tau} [e^{-i\tau\Ht}]_{m+\ell,k+\ell}$.
We can then quite easily see this is the evolution operator:
\begin{equation*}
\begin{split}
    i\partial_\tau U(\tau) &= \sum_k [\Ht e^{-i\tau\Ht}]_{0k} 
    = \sum_{k,\ell} \Ht_{0\ell} [e^{-i\tau\Ht}]_{\ell k} 
    = \biggl(\sum_\ell\Ht_{0\ell}e^{-i\ell \tau}\biggr) U(\tau) 
    = H(\tau)U(\tau).
\end{split}
\end{equation*}
This is thus the evolution operator on the domain of $H(\tau)$.

\subsection{Replica Model Approximations}
\label{sec:rm}
Because the Fourier couplings $\varepsilon B(y)$ are small, Fourier modes interact weakly with each other. We thus truncate $\Ht$ in Fourier modes to help us build an approximate evolution operator. If $S \subset \Zm$, we define 
$$\HilbertF_S = L^2(\Rm^2;S \otimes \Cm^2)$$
and the embedding map $\emb_S : \HilbertF_S \rightarrow \HilbertF$ by $[\emb_S\psi]_\ell = \psi_\ell\delta_{\ell \in S}$. Then we define $\Ht_S = \emb_S^*\widehat{H}\emb_S$. In the physics literature, each of the matrix blocks is called a {\em replica}. The Fourier representation $\widehat{H}_S$ is called a {\em replica model} \cite{Perez_2015}. Typically we will be interested in sets of the form
\begin{equation*}
    S_n := \{-n,\cdots,n\},
\end{equation*}
corresponding to the $(2n+1)-$ replica model,
though for the gap at $1/2$ in the quasi-energy spectrum, we would use the following replicas:
\begin{equation*}
    S_n^{(1/2)} := \{-n+1,\cdots,n\}.
\end{equation*}
For brevity, we will denote $\Ht_n = \Ht_{S_n}$. For $n=1$ for example, we have the edge Hamiltonian
\begin{equation}
    \widehat{H}_{1} = \begin{pmatrix} 1 + D \cdot \sigma & \varepsilon B^*(y) & O \\ \varepsilon B(y) & D \cdot \sigma & \varepsilon B^*(y) \\ O & \varepsilon B(y) & -1 + D \cdot \sigma\end{pmatrix}.
\end{equation}
Note that there are $3\times3$ blocks, and so this is called the $3$-replica model.
For $n = 0$, we simply have $\widehat{H}_0 = D\cdot\sigma$, which corresponds to the ungapped bulk graphene Dirac point.  We write the approximate evolution operator:
\begin{equation}
\label{e:evol_n}
     U_n(\tau) = \sum_{k\in S_n} [e^{-i\tau \Ht_{n}}]_{0 k}.
\end{equation}
We wish to emphasize this approximate evolution is for arbitrary states 
$$\psi \in \text{Dom}(H(\tau)).$$
\begin{thm}
\label{thm:evol_bound}
We have for $n \geq 0$,
\begin{equation}
    \|U(\tau)-U_n(\tau)\|_\op \leq \frac{2}{(n+1)!} (2\|B\|_\op\varepsilon \tau)^{n+1}.
\end{equation}
\end{thm}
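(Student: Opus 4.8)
\emph{Setup and strategy.} The plan is to compare the two evolutions through their Dyson (iterated–Duhamel) expansions in the small Fourier coupling $\varepsilon B$, working in the interaction picture relative to the mode–diagonal part of $\Ht$. Write $\Ht=\Ht_{\mathrm d}+\varepsilon\widehat W$ with $[\Ht_{\mathrm d}\hat\varphi]_n=(n+D\cdot\sigma)\varphi_n$ and $[\widehat W\hat\varphi]_n=B(y)\varphi_{n+1}+B^*(y)\varphi_{n-1}$, so that $\|\widehat W\|_\op\le 2\|B\|_\op$, and correspondingly $\Ht_n=\Ht_{\mathrm d,n}+\varepsilon\widehat W_n$ with $\widehat W_n$ the truncation of $\widehat W$ to $S_n$. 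Since $\Ht_{\mathrm d}$ is self-adjoint, $e^{\mp it\Ht_{\mathrm d}}$ is unitary and the interaction-picture coupling $\widehat W(s):=e^{is\Ht_{\mathrm d}}\widehat W e^{-is\Ht_{\mathrm d}}$ is bounded with $\|\widehat W(s)\|_\op=\|\widehat W\|_\op$; the same holds for $\Ht_n$. In this picture $e^{-i\tau\Ht}=e^{-i\tau\Ht_{\mathrm d}}\Phi(\tau)$ with $\Phi$ the unitary time-ordered exponential generated by $-i\varepsilon\widehat W(s)$, and $U(\tau)=e^{-i\tau D\cdot\sigma}\sum_k[\Phi(\tau)]_{0k}$; likewise $U_n(\tau)=e^{-i\tau D\cdot\sigma}\sum_{k\in S_n}[\Phi_n(\tau)]_{0k}$.

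\emph{Matching low orders.} First I would Duhamel-expand $\Phi$ and $\Phi_n$ to order $n$: $\Phi(\tau)=\sum_{j=0}^n D_j(\tau)+\mathcal R_{n+1}(\tau)$, where $D_j$ is the $j$-fold ordered integral of $(-i\varepsilon)^j\widehat W(s_j)\cdots\widehat W(s_1)$ and $\mathcal R_{n+1}$ is the remainder containing one factor of the full unitary $\Phi$ (same for $\Phi_n$). The crucial structural point is that $\widehat W$ shifts the Fourier index by exactly $\pm1$, so $[D_j(\tau)]_{0k}$ vanishes for $|k|>j$ and, for $j\le n$, involves only modes in $\{-j,\dots,j\}\subseteq S_n$, where $\widehat W$ and $\widehat W_n$ (and their conjugations by the respective diagonals) coincide; hence $\sum_k[D_j(\tau)]_{0k}=\sum_{k\in S_n}[D_j^{(n)}(\tau)]_{0k}$ for every $j\le n$ (both are finite sums). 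Therefore the order-$\le n$ parts of $U$ and $U_n$ agree and
\[
 U(\tau)-U_n(\tau)=e^{-i\tau D\cdot\sigma}\Bigl(\textstyle\sum_k[\mathcal R_{n+1}(\tau)]_{0k}-\sum_{k\in S_n}[\mathcal R_{n+1}^{(n)}(\tau)]_{0k}\Bigr),
\]
and it remains to bound each remainder sum by $\tfrac1{(n+1)!}(2\|B\|_\op\varepsilon\tau)^{n+1}$.

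\emph{The untruncated remainder.} This is the clean half. Writing $\mathcal R_{n+1}(\tau)=(-i\varepsilon)^{n+1}\!\int_{0\le s_1\le\cdots\le s_{n+1}\le\tau}\widehat W(s_{n+1})\cdots\widehat W(s_1)\Phi(s_1)\,ds$, I would expand each $\widehat W(s)$ into its two $\pm1$-shift pieces: each of the $2^{n+1}$ sign words contributes to $\sum_k[\,\cdot\,]_{0k}$ a single block, a product of $n+1$ operators of norm $\le\|B\|_\op$ times the factor $\sum_k[\Phi(s_1)]_{mk}$. Because $\Ht$ is covariant under a shift of the Fourier index, $\sum_k[\Phi(s_1)]_{mk}=e^{is_1 D\cdot\sigma}U(s_1)$ is independent of $m$ and unitary; pulling it out and integrating over the simplex gives $\|\sum_k[\mathcal R_{n+1}(\tau)]_{0k}\|_\op\le\tfrac1{(n+1)!}(2\|B\|_\op\varepsilon\tau)^{n+1}$.

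\emph{The truncated remainder — the main obstacle.} The hard part is the truncated remainder, precisely because $\Ht_n$ is not Fourier-shift covariant, so the row sums $V_n^{(m)}(s):=\sum_{k\in S_n}[e^{-is\Ht_n}]_{mk}$ do not collapse to a single unitary. The same sign-word expansion still yields $\|\sum_{k\in S_n}[\mathcal R_{n+1}^{(n)}(\tau)]_{0k}\|_\op\le\tfrac1{(n+1)!}(2\|B\|_\op\varepsilon\tau)^{n+1}\cdot\sup_{m,\ s\le\tau}\|V_n^{(m)}(s)\|_\op$, so what is really needed is a bound on these row sums (ideally $\le1$, and in any case by a constant close to $1$ on the scale $2\|B\|_\op\varepsilon\tau\lesssim1$ of interest). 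I would obtain this by observing that the $V_n^{(m)}$ solve a closed finite linear ODE system, $i\partial_s V_n^{(m)}=(m+D\cdot\sigma)V_n^{(m)}+\varepsilon B\,V_n^{(m+1)}+\varepsilon B^*V_n^{(m-1)}$ with $V_n^{(\pm(n+1))}\equiv 0$ and $V_n^{(m)}(0)=I$; passing to the conjugated variables $e^{is(m+D\cdot\sigma)}V_n^{(m)}$ turns the generator into the skew-adjoint generator of the untranslated problem up to the Dirichlet truncation at $m=\pm(n+1)$, which makes $\sum_m\|V_n^{(m)}(s)\psi\|^2$ conserved and (via a Grönwall estimate, noting that the indices $m$ actually reached in $n+1$ hops from $0$ lie well inside $S_n$) furnishes the required control. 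Adding the two remainder bounds gives $\|U(\tau)-U_n(\tau)\|_\op\le\tfrac2{(n+1)!}(2\|B\|_\op\varepsilon\tau)^{n+1}$. Along the way one should also record the harmless interchange of the Dyson sum with $\sum_k$ (legitimate since each Dyson order contributes only finitely many $k$ and the Dyson series converges in operator norm), and that $U(\tau)=\sum_k[e^{-i\tau\Ht}]_{0k}$ is already the genuine evolution operator of $H(\tau)$, hence unitary.
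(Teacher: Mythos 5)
Your first two steps are sound: the order-by-order matching of the Dyson terms (using that $\widehat W$ shifts the Fourier index by $\pm1$, so every order-$j$ term with $j\le n$ only sees modes in $S_n$ where $\widehat W$ and $\widehat W_n$ coincide) is correct, and so is the bound on the untruncated remainder, where shift covariance collapses $\sum_k[\Phi(s_1)]_{mk}$ to a single unitary. The genuine gap is exactly where you flag "the main obstacle": your treatment of the truncated remainder does not go through as described, and it is the step on which the stated constant (indeed the whole inequality, uniformly in $\tau$ and $n$) hinges. The conservation law you invoke is true — writing $V_n^{(m)}(s)\psi=[e^{-is\Ht_n}\hat\psi]_m$ with $\hat\psi=(\psi,\dots,\psi)$, unitarity of $e^{-is\Ht_n}$ gives $\sum_{m\in S_n}\|V_n^{(m)}(s)\psi\|^2=(2n+1)\|\psi\|^2$ — but it only yields $\|V_n^{(m)}(s)\|_\op\le\sqrt{2n+1}$, not a bound near $1$. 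The hoped-for bound $\|V_n^{(m)}\|_\op\le1$ is false in general: row sums of a truncated block evolution are not contractions (already for a $2\times2$ model with diagonal $(0,1)$ and coupling $-\epsilon$, the modulus of the row sum exceeds $1$ at second order in $s$). A Gr\"onwall estimate on the closed system you write down gives $\sup_m\|V_n^{(m)}(s)\|_\op\le e^{2\|B\|_\op\varepsilon s}$, which ruins the claimed prefactor for large $\varepsilon\tau$; and the parenthetical "indices reached lie well inside $S_n$" does not help, since after $n+1$ truncated hops from $0$ one reaches rows up to $|m|=n-1$, and in any case even the row-$0$ sum of the truncated evolution need not be a contraction. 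So as written you would obtain at best $\frac{1+\sqrt{2n+1}}{(n+1)!}(2\|B\|_\op\varepsilon\tau)^{n+1}$ or a $\tau$-dependent constant, not the theorem's bound with constant $2$.

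The paper's proof is organized precisely so that this obstacle never appears. Instead of comparing two full Dyson expansions, it telescopes through the intermediate Hamiltonians $\Ht[\ell]=\Ht'+\varepsilon P_\ell\Bt P_\ell$ (which retain all diagonal blocks, hence are self-adjoint and generate unitaries on the full extended space), using $P_0e^{-is\Ht[\ell-1]}V_\ell=0$ to iterate Duhamel $n+1$ times. In the resulting expression the mode sum $\sum_k$ only ever hits the full, shift-covariant evolution $e^{-i\tau_n\Ht}$, which collapses to $U(\tau_n)$ (unitary), while the truncated evolutions enter only through their operator norm ($=1$) inside $\Gamma^{(n)}$; the factor $2$ then comes merely from the two reachable columns $\pm(n+1)$, and the simplex integral gives $\tau^{n+1}/(n+1)!$. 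If you want to salvage your symmetric Dyson-comparison route, you would need either to accept a weaker ($n$- or $\tau$-dependent) constant or to restructure the remainder so that the truncated propagator is never mode-summed — which essentially reproduces the paper's nesting.
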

\begin{proof}
This is a standard application of the Duhamel principle.
We define the linear map (or infinite matrix) $N : \Cm^\Zm \rightarrow \Cm^\Zm$ by $N_{ij} = j\delta_{ij}$, and $I$ be the identity matrix.
We define the coupling matrix  and the block diagonal components as follows:
\begin{align*}
    \Ht' := N \otimes I_2 + I \otimes D\cdot\sigma ,\qquad 
    \Bt := \varepsilon^{-1}(\Ht - \Ht').
\end{align*}
$I_2$ is the $2\times2$ identity matrix.
Let $P_n = \emb_{S_n}\emb_{S_n}^*$ be the projection onto $S_n$ replicas over the space $\HilbertF$. We rewrite our replica approximation slightly:
\begin{equation*}
    \Ht[n] := \Ht' + \varepsilon P_n\Bt P_n.
\end{equation*}
This corresponds to $\Ht_n$, simply extended to leave the diagonal block entries unchanged. Note that this does not modify the evolution operator:
\begin{equation*}
    U_n(\tau) = \sum_{k \in S_n} [e^{-i\tau\Ht[n]}]_{0 k}.
\end{equation*}
We define 
\begin{equation*}
    V_n = \Bt - P_{n}\Bt P_{n}.
\end{equation*}
Duhamel's principle then gives us:
\begin{equation*}
    U(\tau) = \sum_k \biggl[e^{-i\tau\Ht[n]} + \frac{\varepsilon}{i}\int_0^\tau e^{-i(\tau-\tau_n)\Ht[n]}V_n e^{-i\tau_n\Ht}d\tau_n\biggr]_{0 k}.
\end{equation*}
We thus have
\begin{equation}
\label{e:recursive1}
    U(\tau)-U_n(\tau) = \sum_k \frac{\varepsilon}{i}\int_0^\tau \biggl[e^{-i(\tau-\tau_n)\Ht[n]}V_n e^{-i\tau_n\Ht}\biggr]_{0 k}d\tau_n.
\end{equation}
Using Duhamel's principle yet again, we notice we have the general formula ($\ell > 0$):
\begin{equation*}
    e^{-i s \Ht[\ell]} = e^{-i s\Ht[\ell-1]} + \frac{\varepsilon}{i}\int_0^s e^{-i(s-s')\Ht[\ell-1]}V_{\ell-1} e^{-is'\Ht}ds'.
\end{equation*}
We will wish to apply this sequentially. To do this, we notice that 
\begin{equation}
\label{e:recursive2}
\begin{split}
    P_0e^{-is\Ht[\ell]}V_\ell &= P_0\biggl(e^{-i s\Ht[\ell-1]} + \frac{\varepsilon}{i}\int_0^s e^{-i(s-s')\Ht[\ell-1]}V_{\ell-1} e^{-is'\Ht[\ell]}ds'\biggr)V_\ell \\
    &=P_0\frac{\varepsilon}{i}\int_0^s e^{-i(s-s')\Ht[\ell-1]}V_{\ell-1} e^{-is'\Ht[\ell]}ds'V_\ell,
\end{split}
\end{equation}
since $P_0e^{-i s\Ht[\ell-1]}V_\ell = 0$. Applying \eqref{e:recursive2} in \eqref{e:recursive1}, we obtain:
\begin{equation}
\begin{split}
    U(\tau)-U_n(\tau) &= (-i\varepsilon)^{n+1}\sum_k \int_0^{\tau}\int_0^{\tau_0}\int_0^{\tau_1}\cdots \int_0^{\tau_{n-1}} \\
    &\hspace{2mm}\times\biggl[e^{-i(\tau-\sum_{j=0}^n\tau_j)\Ht_0} V_0\prod_{\ell = 1}^n \biggl(e^{-i \tau_{\ell-1}\Ht[\ell]}V_\ell\biggr)e^{-i\tau_n\Ht}\biggr]_{0k}d\tau_1\cdots d\tau_n.
\end{split}
\end{equation}
Here we use the (ordered) product notation for matrices $A_\ell$:
$\prod_{\ell = 1}^n A_\ell = A_1A_2\cdots A_n$.
We let
\begin{equation*}
    \Gamma^{(n)} := e^{-i(\tau-\sum_{j=0}^n\tau_j)\Ht_0} V_0\prod_{\ell = 1}^n\biggl(e^{-i \tau_{\ell-1}\Ht[\ell]}V_\ell\biggr).
\end{equation*}
Then $\Gamma_{0 \ell}^{(n)}$ terms only are non-zero for $\ell \in \pm (n+1)$, and hence
\begin{equation}
    \begin{split}
        \sum_k[\Gamma^{(n)}e^{-i\tau_n\Ht}]_{0k} &= \sum_k\sum_{\ell \in \pm (n+1)}[\Gamma^{(n)}]_{0\ell}[e^{-i\tau_n\Ht}]_{\ell k} \nonumber\\
        &= \sum_k\sum_{\ell \in \pm (n+1)}[\Gamma^{(n)}]_{0\ell}[e^{-i\tau_n\Ht}]_{0,k-\ell}e^{-i\ell\tau_n} 
        = \sum_{\ell \in \pm (n+1)}[\Gamma^{(n)}]_{0\ell}e^{-i\ell\tau_n}U(\tau).
    \end{split}
\end{equation}
Now $\|\Gamma^{(n)}\|_\op \leq (2\|B\|_\op)^{n+1}$, and hence
\begin{equation*}
    \|U(\tau)-U_n(\tau)\|_\op \leq 2(2\|B\|_\op \varepsilon)^{n+1} \int_0^\tau\int_0^{\tau_0}\cdots \int_0^{\tau_{n-1}} d\tau_0\cdots d\tau_n = \frac{2}{(n+1)!} (2\|B\|_\op \varepsilon\tau)^{n+1}.
\end{equation*}
The theorem statement follows.
\end{proof}
The replica model $\Ht_n$ may in fact be used to describe the evolution up to a timescale $\tau \sim \varepsilon^{-(n+1)}$ since $H(\tau)$ is periodic. To see this, we let $\tau = 2\pi N + \tau'$ for $0\leq\tau'<2\pi$ and $N \in \Nm$, and define
\begin{equation*}
    U_n'(\tau) = U_n(\tau') [U_n(2\pi)]^N.
\end{equation*}
Note that $U_n$ defined in \eqref{e:evol_n} is not a unitary operator so that $U_n'(\tau)$ is not quite $U_n(\tau)$. 
Likewise, by unitarity of $U(\tau)$ and periodicity of $H(\tau)$,
\begin{equation*}
    U(\tau) = U(\tau') [U(2\pi)]^N.
\end{equation*}
Then we find:
\begin{thm}
Let $c_n = \frac{2(4\pi\|B\|_\op)^{n+1}}{2\pi(n+1)!}$. For the system defined as above,
\begin{equation}
\label{e:error_prop}
    \|U(\tau) - U_n'(\tau)\|_\op \leq c_n(\tau+1) \eps^{n+1} \cdot e^{c_n\tau \eps^{n+1}}.
\end{equation}
\end{thm}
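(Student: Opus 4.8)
The plan is to bootstrap the single-window estimate of Theorem~\ref{thm:evol_bound} to all times by exploiting the factorizations $U(\tau)=U(\tau')[U(2\pi)]^N$ and $U_n'(\tau)=U_n(\tau')[U_n(2\pi)]^N$, where $\tau=2\pi N+\tau'$ with $N\in\Nm$ and $0\le\tau'<2\pi$, together with a telescoping argument.

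First I would apply Theorem~\ref{thm:evol_bound} at the two times $\tau'$ and $2\pi$. Since $0\le\tau'\le 2\pi$ one has $(\tau')^{n+1}\le (2\pi)^n\tau'$, and a short bookkeeping of the constants turns the estimate into $\|U(\tau')-U_n(\tau')\|_\op\le c_n\tau'\eps^{n+1}$ and $\|U(2\pi)-U_n(2\pi)\|_\op\le 2\pi c_n\eps^{n+1}=:\delta_0$, with $c_n$ as in the statement. By unitarity of $U(\tau)$ we have $\|U(2\pi)\|_\op=1$, whence $\|U_n(2\pi)\|_\op\le 1+\delta_0$ and $\|[U_n(2\pi)]^j\|_\op\le(1+\delta_0)^j\le e^{j\delta_0}$; the possible exponential growth of these powers, coming from the non-unitarity of $U_n$, is the only point that needs care.

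Next I would telescope over the $N$ full periods,
\[
[U(2\pi)]^N-[U_n(2\pi)]^N=\sum_{j=0}^{N-1}[U(2\pi)]^{N-1-j}\bigl(U(2\pi)-U_n(2\pi)\bigr)[U_n(2\pi)]^j,
\]
so that $\|[U(2\pi)]^N-[U_n(2\pi)]^N\|_\op\le\delta_0\sum_{j=0}^{N-1}(1+\delta_0)^j=(1+\delta_0)^N-1\le N\delta_0\,e^{N\delta_0}$, using $1+x\le e^x$ and $e^x-1\le xe^x$. The quantitative heart of the argument is that $2\pi N=\tau-\tau'\le\tau$, hence $N\delta_0=(\tau-\tau')c_n\eps^{n+1}\le c_n\tau\eps^{n+1}$: the error accumulated over $N$ periods stays controlled precisely on the timescale $\tau\lesssim\eps^{-(n+1)}$. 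Finally, writing $U(\tau)-U_n'(\tau)=U(\tau')\bigl([U(2\pi)]^N-[U_n(2\pi)]^N\bigr)+\bigl(U(\tau')-U_n(\tau')\bigr)[U_n(2\pi)]^N$ and bounding each term with $\|U(\tau')\|_\op=1$ gives $\|U(\tau)-U_n'(\tau)\|_\op\le\bigl((\tau-\tau')+\tau'\bigr)c_n\eps^{n+1}e^{N\delta_0}=c_n\tau\eps^{n+1}e^{N\delta_0}\le c_n\tau\eps^{n+1}e^{c_n\tau\eps^{n+1}}\le c_n(\tau+1)\eps^{n+1}e^{c_n\tau\eps^{n+1}}$, which is \eqref{e:error_prop}. (When $\tau<2\pi$ we have $N=0$, the telescoped term is absent, and the bound reduces to Theorem~\ref{thm:evol_bound}.) The only real obstacle is the lack of unitarity of $U_n$, but since the per-period discrepancy is $O(\eps^{n+1})$, the growth factor $e^{N\delta_0}$ remains bounded and the argument closes.
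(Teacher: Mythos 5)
Your proposal is correct and follows essentially the same route as the paper: split $\tau=2\pi N+\tau'$, apply Theorem~\ref{thm:evol_bound} on a single window, telescope $[U(2\pi)]^N-[U_n(2\pi)]^N$, and control the powers of the non-unitary $U_n(2\pi)$ by $e^{N\delta_0}$, which yields the stated bound with the same constant $c_n$. The only difference is the (immaterial) ordering of the two terms in the decomposition of $U(\tau)-U_n'(\tau)$; your constant bookkeeping $(\tau')^{n+1}\le(2\pi)^n\tau'$ and $2\pi N\le\tau$ matches the paper's.
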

\begin{proof}
We compute 
\[
U(\tau)-U_n'(\tau) = (U(\tau')-U_n(\tau'))U^N(2\pi) + U_n(\tau')(U^N(2\pi)-U^N_n(2\pi)).
\]
The first term is bounded in operator norm by $2\pi c_n\eps^{n+1}$ by the previous theorem and unitarity of $U$. From the above theorem, we find that $\|U_n(2\pi)\|\leq 1+(2\pi c_n)^{n+1}$. The second term is $U_n(\tau')$ times 
\[
    x^N-y^N=\sum_{\ell=1}^{N} x^{N-\ell}(x-y)y^{\ell-1},\qquad \mbox{ for } \quad x=U(2\pi)\ \mbox{ and } \ y=U_n(2\pi).
\]
Thus, still using the previous theorem and the bound on $\|U_n(2\pi)\|$, $U(\tau)-U_n(\tau)$ is bounded in operator norm by $N 2\pi c_n\eps^{n+1}e^{N 2\pi c_n \eps^{n+1}}$, which concludes the proof.
\end{proof}
Both $U_n$ and $U_n'$ are built only out of the $n$-replica model. The above approximation result on $U_n'$ captures the evolution $U(\tau)$ up to the time scale $\tau \sim \varepsilon^{-(n+1)}$. 


%
%
%
%


\medskip

Let us now consider the approximation of observations such as the conductivity in \eqref{eq:sigmaI}. The derivation is entirely formal as it involve traces that may not be defined at this level of generality (see the next section).

Let us construct a functional $g(\Ht_n)$ and consider $\psi \in \Hilbert$ such that 
\begin{equation}
\label{e:replica_to_nonreplica}
\psi = \sum_j [g(\Ht_n)\hat\psi]_j
\end{equation}
for $\hat\psi \in \HilbertF_n$. Since $\Ht_n$ will be shown to have a spectral gap in some cases, $g(\Ht_n)$ plays the role of $\varphi'(H)$ in \eqref{eq:sigmaI}. The evolution of such a wavefield is then given by
\begin{equation*}
    \begin{split}
        U(\tau)\psi &= \sum_k [e^{-i\tau\Ht}]_{0k}\sum_j [g(\Ht_n)\hat\psi]_j   
        = \sum_{\ell,j} e^{i\ell\tau}[e^{-i\tau\Ht}]_{\ell j} [g(\Ht_n)\hat\psi]_j \\&
        = \sum_\ell e^{i\ell\tau}[e^{-i\tau\Ht}\emb_{S_n}g(\Ht_n)\hat\psi]_\ell.
    \end{split}
\end{equation*}
A natural approximation of the evolution is then given in the extended space by
\begin{equation*}
    \Ut_n(\tau)\hat\psi := \sum_\ell e^{i\ell\tau}[e^{-i\tau\Ht_n}g(\Ht_n)\hat\psi]_\ell.
\end{equation*}
By a Duhamel argument similar to those above, we find that
\[
\| U(\tau)\psi - \Ut_n(\tau)\hat\psi\|_\op \lesssim \eps\tau
\]
where $\hat\psi$, $\psi$ are related by \eqref{e:replica_to_nonreplica}. If we considered the density of states  adapted to the low-energy range of $\Ht_n$ and given by
\[
\rho = \sum_{\ell,k}[g(\Ht_n)]_{\ell,k},
\]
then to leading order we have its evolution $\rho(\tau)=U(\tau)\rho U^*(\tau)$ in the Heisenberg formalism approximated by
\[
\rho(\tau) \sim \sum_{\ell,k} [\Ut_n(\tau)g(\Ht_n)\Ut_n^*(\tau)]_{\ell k} = \sum_{\ell,k} [g(\Ht_n)]_{\ell,k}e^{i\tau(\ell-k)}.
\]
Then the time average of the observable $i[H(\tau),P]$ would be presented by
\[
\sigma_I = \mint_0^{2\pi}\Tr\ i[H(\tau),P]\rho(\tau)d\tau = \Tr\  i[\Ht_n,P] g(\Ht_n).
\]

The above results justify replacing the extended operator $\Ht$ by its approximation $\Ht_n$ in the analysis of the evolution operator $U(\tau)$ (justified rigorously) as well as in the evolution of observations of interest such as interface conductivities (justified heuristically). 

It turns out that yet a simpler (effective) $2\times2$ system also provides good accuracy. We now consider its properties.
\subsection{Approximate $2\times2$ system}
\label{sec:2x2}
The replica model with $n=1$ offers an accuracy of order $\eps^2$ on the unitary evolution. However, it involves a $6\times6$ system whose off-diagonal components are small (themselves of order $\eps^2$). In this section, we approximate the $n=1$ replica model by a $2\times2$ system of a form similar to \eqref{eq:H2x2} and with a (small) energy gap.
 
We define an approximate $2\times2$ system by:
\begin{align}\label{eq:htwo}
    &\htwo =  D\cdot\sigma + \varepsilon^2\htwo_1, \qquad 
    \htwo_1 :=  B^*B-BB^*.
\end{align}
We also need to assume $B(y)$ is smooth to obtain a meaningful approximation. In particular, if $\chi$ and $\chi_d$ have supports a distance $d$ apart, then we assume that
\begin{equation}
\label{assump:regularity}
    \|\chi(D\cdot\sigma) B \chi_d(D\cdot\sigma)\|_\op \leq \|B\|_\op e^{-c\varepsilon^{-1} d}.
\end{equation}
To make sense of this estimate, we recall the rescaling \eqref{eq:rescaling}. In the original units, the requirement is that $m(y)$ be sufficiently regular. For instance, we verify that the above relation holds when $m(y) = -1_{(-\infty,0]}\ast \phi(x) + 1_{[0,\infty)}\ast \phi(x)$ for some Gaussian $\phi$. 

We define the $2\times2$ evolution:
\[
U_\htwo(\tau) := e^{-i\tau \htwo}.
\]
Note that $\htwo$ is the effective Hamiltonian obtained (formally) in high-frequency analyses of periodically driven system; see, e.g. \cite{dalibard} or \cite[Section 6]{bukov}.

We no longer expect $U(\tau) - U_\htwo(\tau)$ to be small in operator norm. Mathematically, the unbounded operators $\pm1+D\cdot\sigma$ need to be applied during the elimination procedure and this requires regularity. We first consider the evolution of density observables of the form $g_\alpha(\htwo)$ for $g$ a smooth function supported on $[-c_0,c_0]$ for some $c_0 > 0$ while $g_\alpha(x) = g(x\varepsilon^{-\alpha})$ with $\alpha > 0$. A periodic $\eps$ scaled fluctuation remains on top of this, which we define by
$$
u_\htwo(\tau) = (e^{-i\tau}-1)B^* + (e^{i\tau}-1)B.
$$
We then obtain the following theorem:
%
%
%
%
\begin{thm}
\label{thm:2b2}
Assume \eqref{assump:regularity} and let $\htwo$ and $g_\alpha$ be defined as above.  Let $\beta = \alpha$ if $\alpha < 1$, and $\beta < 1$ if $\alpha \geq 1$.
Then if we chose $\eps >0$ sufficiently small, we obtain
\begin{equation*}
    \| \bigl(U(\tau) - U_\htwo(\tau)-\eps u_\htwo(\tau)\bigr)g_\alpha(\htwo)\|_\op \leq C_{\beta} (\tau+1) \eps^{1+\beta},
\end{equation*}
for $\tau  < c_{\beta} \eps^{-(1+\beta)}$. $c_{\beta}$ and $C_{\beta}$ depend on $\|B\|_\op$, $c$, $g$, and $\beta$. $C_{\beta} \rightarrow \infty$ and $c_{\beta} \rightarrow 0$ as $\beta \rightarrow 1$.
\end{thm}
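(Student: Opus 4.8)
The plan is to treat this as a high-frequency averaging statement of van Vleck / Floquet--Magnus type, made rigorous on the low-energy (non-resonant) part of the spectrum, where the regularity hypothesis \eqref{assump:regularity} and the cutoff $g_\alpha(\htwo)$ tame the unbounded operators. First I would invoke Theorem \ref{thm:evol_bound} and its periodized counterpart to replace $U(\tau)$ by the $3$-replica propagator $U_1(\tau)=\sum_{k\in S_1}[e^{-i\tau\Ht_1}]_{0k}$, at the cost of an error $O(\eps^2(\tau+1))$, which is dominated by the claimed $O(\eps^{1+\beta}(\tau+1))$ since $\beta<1$. Next I would block-diagonalize $\Ht_1$ by a near-identity Schrieffer--Wolff conjugation $e^{-i\eps S}$, with $S$ off-diagonal in the replica index, so that $e^{i\eps S}\Ht_1e^{-i\eps S}=\widehat D+O(\eps^3)$ with $\widehat D$ block-diagonal and middle ($n=0$) block $D\cdot\sigma+\eps^2\htwo_1+O(\eps^3)=\htwo+O(\eps^3)$ --- the $O(\eps^2)$ term being exactly $\eps^2[B^*,B]=\eps^2\htwo_1$ by the second-order high-frequency formula. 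Determining $S$ to leading order uses only the resolvents $(\pm1+D\cdot\sigma-E)^{-1}$, which are bounded for $E$ in a fixed neighbourhood of $0$, hence on the range of $g_\alpha(\htwo)$ (energies $O(\eps^\alpha)$, far from the quasi-energy resonances near $\pm\tfrac12$). Expanding $e^{-i\tau\Ht_1}=e^{-i\eps S}e^{-i\tau\widehat D}e^{i\eps S}$ and collecting terms, $U_1(\tau)g_\alpha(\htwo)$ equals $e^{-i\tau\htwo}g_\alpha(\htwo)$ plus an $O(\eps)$ correction that a direct computation identifies as $\eps u_\htwo(\tau)g_\alpha(\htwo)$ --- the factors $e^{\pm i\tau}$ there coming from $e^{-i\tau(\pm1+D\cdot\sigma)}=e^{\mp i\tau}e^{-i\tau D\cdot\sigma}$ in the adjacent replica blocks --- plus remainders. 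Every remainder involves $D\cdot\sigma$ (or $\pm1+D\cdot\sigma$) only through such bounded resolvents or through commutators with $B,B^*$; by \eqref{assump:regularity} each commutator $[B,f(D\cdot\sigma)]$ is bounded and gains a factor $\eps^{1-\delta}$ for any $\delta>0$ (and $\eps^{1-\delta}(1+\tau)$ when $f=e^{-i\tau\,\cdot}$), because $B$ varies only on the scale $\eps^{-1}$.

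The estimates then split in two. The $O(\eps^3)$ residual of the middle block of $\widehat D$ relative to $\htwo$ is bounded on the low-energy window, so a Duhamel estimate bounds its contribution by $O(\eps^3\tau)$; together with the bounded $O(\eps^2)$-prefactored truncation errors of the Schrieffer--Wolff expansion, all of these are $\ll\eps^{1+\beta}(\tau+1)$ for $\tau\lesssim\eps^{-(1+\beta)}$. The delicate contribution comes from the fact that $u_\htwo(\tau)$, built from $B$ and $B^*$, does not commute with $\htwo$: the expansion naturally produces $\eps u_\htwo(\tau)\,e^{-i\tau\htwo}g_\alpha(\htwo)$ (since $\widehat D$ acts before the micromotion), which differs from $\eps u_\htwo(\tau)g_\alpha(\htwo)$ by $\eps u_\htwo(\tau)(e^{-i\tau\htwo}-1)g_\alpha(\htwo)$, and similarly for the micromotion factor at time $0$. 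Here I would invoke \eqref{assump:regularity} once more: $B$ and $B^*$ spread momentum --- hence $\htwo$-energy --- by only $O(\eps^{1-\delta})$, so $u_\htwo(\tau)g_\alpha(\htwo)$ stays localized at energies $O(\eps^{\min(\alpha,1)-\delta})$, whence $\|(e^{-i\tau\htwo}-1)u_\htwo(\tau)g_\alpha(\htwo)\|\lesssim\min(1,\eps^{\min(\alpha,1)-\delta}\tau)$, and this term is bounded by $C\eps\min(1,\eps^{\beta}\tau)\le C_\beta\eps^{1+\beta}(\tau+1)$ with $\beta=\alpha$ when $\alpha<1$ and $\beta$ any value $<1$ when $\alpha\ge1$. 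The strict inequality $\beta<1$ and the blow-up $C_\beta\to\infty$, $c_\beta\to0$ as $\beta\uparrow1$ are precisely the loss $\delta>0$ in the commutator estimate. Summing the two groups of terms and restoring the $U\to U_1$ error yields the claim for $\tau<c_\beta\eps^{-(1+\beta)}$.

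The hard part will be the rigorous bookkeeping of the unbounded operators across the time integration. One must show that the $O(\eps)$ micromotion corrections and the flow $e^{-i\tau\widehat D}$ keep $g_\alpha(\htwo)$-localized data inside a \emph{fixed} low-energy, non-resonant window of $D\cdot\sigma$ over the whole horizon $\tau\lesssim\eps^{-(1+\beta)}$, so that the bounded-resolvent and $\eps^{1-\delta}$-commutator estimates stay applicable at every step of the iterated Duhamel expansion; this requires propagating spectral-localization bounds along the evolution and balancing, uniformly over the horizon, the factors of $\eps$ gained from \eqref{assump:regularity} against the factors of $\tau$ (or $\eps^{-\delta}$) lost when commuting $B,B^*$ past the oscillatory exponentials $e^{-i\tau(\pm1+D\cdot\sigma)}$. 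It is this Gronwall-type bootstrap on the localized quantities that caps $\beta$ strictly below $1$ and degrades $C_\beta$, $c_\beta$ as $\beta\uparrow1$.
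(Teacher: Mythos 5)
Your route (van Vleck/Schrieffer--Wolff block-diagonalization of $\Ht_1$ plus micromotion) is genuinely different from the paper's, and you correctly identify the key ingredients (reduction to the $3$-replica model, the effective block $\htwo$, the origin of $\eps u_\htwo$, the role of \eqref{assump:regularity}, the degeneration as $\beta\to1$). But the proposal has a real gap, and it sits exactly where you park it at the end. First, no globally bounded generator $S$ exists: the Sylvester equations determining $S_{0,\pm1}$ require inverting $X\mapsto(\pm1+D\cdot\sigma)X-X\,(D\cdot\sigma)$, and the spectra of $D\cdot\sigma$ and $\pm1+D\cdot\sigma$ overlap (the replica bands cross near $|\xi|=1/2$, the quasi-energy $1/2$ resonance), so $S$ can only be constructed after inserting momentum cutoffs adapted to the range of $g_\alpha(\htwo)$. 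Once cutoffs are inserted, the exact identity $e^{-i\tau\Ht_1}=e^{-i\eps S}e^{-i\tau\widehat D}e^{i\eps S}$ is no longer available, and the whole argument then rests on the deferred claim that $g_\alpha(\htwo)$-localized data stay in a fixed low-energy window of $D\cdot\sigma$ uniformly for $\tau\lesssim\eps^{-(1+\beta)}$. That claim is the crux and is not established; your own accounting shows the danger: commuting $B,B^*$ past $e^{-i\tau D\cdot\sigma}$ loses a factor of $\tau$, which over the full horizon is $\eps^{-(1+\beta)}$ and swamps the $O(\eps^{1-\delta})$ gain from \eqref{assump:regularity}. An asserted ``Gronwall-type bootstrap'' does not close this as stated.

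The paper avoids the issue entirely by never running the replica analysis over the long horizon. It first periodizes: writing $\tau=2\pi N+\tau'$ and telescoping $U(\tau)-U_\htwo(\tau)-\eps u_\htwo(\tau)$ against powers of $U(2\pi)$, it uses (i) unitarity of $U(2\pi)$, (ii) the fact that $u_\htwo$ vanishes at integer multiples of $2\pi$, and (iii) crucially, that $U_\htwo(s)$ commutes with $g_\alpha(\htwo)$, so $U_\htwo(s)g_\alpha(\htwo)=g_{\alpha,s}(\htwo)$ is again a cutoff of the same type: spectral localization is preserved exactly by the effective flow and no propagation estimate is ever needed. Everything then reduces to a single-period estimate, which the paper proves not by conjugation but by resolvent comparison: Helffer--Sj\"ostrand for $u_\tau\circ\chi_\beta(\htwo)$ versus $[u_\tau\circ\chi_\beta(\Ht_1)]_{00}$, Schur complements producing the self-energy $\tilde\htwo_1(z)$, and \eqref{assump:regularity} to get $\|(\tilde\htwo_1(z)-\htwo_1)\chi_\beta(D\cdot\sigma)\|_\op\lesssim\eps^{\beta}$, while the micromotion $\eps u_\htwo$ emerges from the off-diagonal blocks $[e^{-i\tau\Ht_1}]_{\pm1,0}$ over that one period (also note the $\beta$-loss there comes from the $\eps^{2-\beta}$/$\eps^{\beta}$ balance in these estimates, not from a commutator $\delta$-loss). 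If you import that periodization step first, your conjugation argument only needs to be uniform on $\tau\le2\pi$, where it is plausible; as written, the long-time localization step is a missing piece of the proof, not a bookkeeping detail.
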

\begin{remark}
Observe that as $\alpha \rightarrow 0$, then $g_\alpha$ has support of order $1$, and thus is coupling to higher order frequencies. The $2\times2$ system breaks down in this case. 
Note also that $(U(\tau)-U_{\htwo}(\tau)-\eps u_\htwo(\tau))\psi$ is small for $\psi$ in the range of $g_\alpha(\htwo)$. This will be generalized in a corollary below.
\end{remark}
\begin{proof}
We can restrict our attention to the case $\tau < 2\pi$. 
consider
\[
\delta := \sup_{0\leq \tau <2\pi}\|[U(\tau)- U_\htwo(\tau)-\eps u_\htwo(\tau)] g_\alpha(\htwo)\|_\op.
\]
It is easy to see $\delta = O(\eps\tau)$ by a Duhamel argument on the first term.
Then we let $\tau = 2\pi N + \tau'$, $0 \leq \tau' < 2\pi$.
\begin{equation*}
\begin{split}
    \|[U(\tau) - U_\htwo(\tau)-\eps u_\htwo(\tau)]g_\alpha(\htwo)\| &\leq \|U(2\pi)^N(U(\tau')-U_\htwo(\tau')-\eps u_\htwo(\tau))g_\alpha(\htwo)\|_\op \\
    & \hspace{5mm}+ \sum_{k =0}^{N-1} \| U(2\pi)^k(U(2\pi)-U_\htwo(2\pi))U_\htwo(\tau-2\pi(1+k))g_\alpha(\htwo)\|_\op \\
    &\leq \delta + \sup_{\tilde \tau}N\| (U(2\pi)-U_\htwo(2\pi)) g_{\alpha,\tilde \tau}(\htwo)\|_\op,
\end{split}
\end{equation*}
where $g_{\alpha,\tau}(x) = e^{-i\tau x}g_\alpha(\htwo)$. It becomes clear from the arguments in the proof that the $e^{-i\tau x}$ factor makes no difference, so we ignore it and focus on bounding for $\tau \leq 2\pi$
\[\|\bigl(U(\tau)-U_\htwo(\tau)-\eps u_\htwo(\tau)\bigr)g_\alpha(\htwo)\|_\op.\]
We choose to use the approximation of the evolution: 
\begin{equation*}
    \widetilde{U}_1(\tau) := \sum_k e^{ik\tau}[e^{-i\tau \Ht_1}]_{k0}.
\end{equation*}
We have
\begin{equation*}
    \| \widetilde{U}_1(\tau)-U(\tau)\|_\op \lesssim  \varepsilon^2
\end{equation*}
as in the previous theorem. It therefore suffices to bound $\|(\widetilde{U}_1(\tau) - U_\htwo(\tau)-\eps u_\htwo(\tau))g_\alpha(\htwo)\|_\op$.
We first show
%
%
\begin{lemma}
\label{lemma:first}
For $f \in C_0^3(\Rm)$, $\beta < 2$, and $f_\beta(x) := f(x\eps^{-\beta})$, we have the regularity result
\begin{equation}
\label{e:regularity}
\|f_\beta(\htwo)-f_\beta(D\cdot\sigma)\|_\op \lesssim \eps^{2-\beta}.
\end{equation}
\end{lemma}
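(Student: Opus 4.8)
The plan is to estimate $f_\beta(\htwo) - f_\beta(D\cdot\sigma)$ by writing the difference of functional calculus via an almost-analytic (Helffer–Sjöstrand) extension of $f_\beta$, so that
\[
f_\beta(\htwo) - f_\beta(D\cdot\sigma) = \frac{1}{\pi}\int_{\Cm} \bar\partial \tilde f_\beta(z)\,\big[(z-\htwo)^{-1} - (z-D\cdot\sigma)^{-1}\big]\,dz\wedge d\bar z,
\]
and then to use the resolvent identity $(z-\htwo)^{-1} - (z-D\cdot\sigma)^{-1} = (z-\htwo)^{-1}\,\varepsilon^2\htwo_1\,(z-D\cdot\sigma)^{-1}$. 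Since $\htwo_1 = B^*B - BB^*$ is bounded by $2\|B\|_\op^2$ and both resolvents are bounded by $|\Imag z|^{-1}$, the integrand is bounded by $C\varepsilon^2 |\bar\partial\tilde f_\beta(z)|\,|\Imag z|^{-2}$. With $f\in C_0^3$ one can build an almost-analytic extension with $|\bar\partial \tilde f(w)| \lesssim |\Imag w|^2$ supported near the real axis, which renders the integral convergent; this gives $\|f_\beta(\htwo)-f_\beta(D\cdot\sigma)\|_\op \lesssim \varepsilon^2 \cdot (\text{scaling factor from } f_\beta)$.

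The scaling is where the exponent $2-\beta$ (rather than $2$) enters. Writing $f_\beta(x) = f(x\varepsilon^{-\beta})$, a change of variables $z = \varepsilon^\beta w$ in the contour integral shows that the effective extension has $\bar\partial$ supported in a strip of width $O(\varepsilon^\beta)$ about the real axis, and each factor of $|\Imag z|^{-1}$ contributes a factor $\varepsilon^{-\beta}$. Carefully: $\bar\partial\tilde f_\beta(z) = \varepsilon^{-\beta}(\bar\partial\tilde f)(z\varepsilon^{-\beta})$ and $dz\wedge d\bar z$ scales like $\varepsilon^{2\beta}$, while the two resolvent factors together scale like $\varepsilon^{-2\beta}$; combining, one power of $\varepsilon^{-\beta}$ survives, so the net estimate is $\varepsilon^2 \cdot \varepsilon^{-\beta} = \varepsilon^{2-\beta}$, as claimed. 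The hypothesis $\beta < 2$ is exactly what keeps this positive, i.e. the bound genuinely small.

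The main obstacle I anticipate is not the algebra of the resolvent identity but justifying the functional calculus manipulations for the \emph{unbounded} self-adjoint operators $\htwo$ and $D\cdot\sigma$: one must check that the almost-analytic extension integral genuinely represents $f_\beta$ applied to each operator (standard once $f\in C_0^3$, but worth citing, e.g., via the Helffer–Sjöstrand formula), and that $\htwo$ is self-adjoint on the same domain as $D\cdot\sigma$ since $\varepsilon^2\htwo_1$ is a bounded symmetric perturbation. A secondary point is whether the regularity assumption \eqref{assump:regularity} is actually needed here; at this stage only boundedness of $\htwo_1$ is used, so I expect \eqref{assump:regularity} to enter only in later lemmas where one localizes in frequency, and Lemma \ref{lemma:first} itself should go through with just the Helffer–Sjöstrand argument and the scaling bookkeeping above.
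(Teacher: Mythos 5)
Your proposal is correct and takes essentially the same route as the paper: a Helffer--Sj\"ostrand representation with an order-two almost-analytic extension of a $C_0^3$ function, the resolvent identity with the bounded perturbation $\eps^2\htwo_1$, and a single $\eps^{-\beta}$ loss from the scaling (the paper rescales the operator, writing $f_\beta(A)=f(\eps^{-\beta}A)$, rather than rescaling $f$ and changing variables, but the bookkeeping is identical). You are also right that \eqref{assump:regularity} is not used in this lemma; the paper's proof relies only on boundedness of $\htwo_1$.
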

\begin{proof}
For this, we briefly recall the tools to apply the Helffer-Sj\"{o}strand formula \cite{davies_1995}; see also \cite{bal2,bal3} for similar contexts. We let
\[
\bar\partial = \frac{1}{2}(\partial_x + i\partial_y).
\]
We define the almost analytic extension with $z=x+iy$
\[
\tilde f(z) = (f(x) + iyf'(x) - \frac{1}{2}y^2f''(x))\lambda(y)
\]
where $\lambda(y)$ is smooth in $y$ supported on $[-2,2]$, and $\lambda(y) = 1$ on $[-1,1]$. Then the Helffer-Sj\"{o}strand formula gives for self-adjoint operator $A$:
\begin{equation*}
    f(A) = -\frac1\pi\int_\Cm \bar\partial \tilde f(z) (z-A)^{-1}dz,
\end{equation*}
with here $dz= dxdy$.
Observe that
\[
\bar\partial \tilde f(z) = -\frac{1}{4}y^2f^{(3)}(x)\lambda(y ) + \frac{1}{2}(f(x) + iyf'(x) -\frac{1}{2} y^2f''(x))\lambda'(y)\ 
\mbox{ and } \
\int_\Cm |\bar\partial \tilde f(z)| \cdot |y|^{-2} dz \lesssim 1.
\]
We find
\begin{equation*}
\begin{split}
 f_\beta(\htwo) - f_\beta(D\cdot\sigma) &= \int_\Cm \bar\partial\tilde f(z) \biggl[(z-\eps^{-\beta}\htwo)^{-1}-(z-\eps^{-\beta}D\cdot\sigma)^{-1}\biggr]dz \\
 &= -\eps^{2-\beta}\int_\Cm \bar\partial\tilde f(z) \biggl[(z-\eps^{-\beta}\htwo)^{-1}\htwo_1(z-\eps^{-\beta}D\cdot\sigma)^{-1}\biggr]dz,
\end{split}
\end{equation*}
and hence
\begin{equation*}
\|f_\beta(\htwo)-f_\beta(D\cdot\sigma)\|_\op \lesssim \eps^{2-\beta}.
\end{equation*}
\end{proof}
%
%
\begin{lemma}
Let $\beta$ be defined as in the theorem statement. Then
\begin{equation}
    \|U_\htwo(\tau)g_\alpha(\htwo) - [e^{-i\tau\Ht_1}\chi_\beta(\Ht_1)]_{00}g_\alpha(\htwo)\|_\op  \lesssim \eps^{2}.
\end{equation}
\end{lemma}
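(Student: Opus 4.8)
The goal is to show that propagating by the full replica evolution $e^{-i\tau\Ht_1}$, restricted to the low-energy window via $\chi_\beta(\Ht_1)$, agrees with the $2\times 2$ effective propagator $U_\htwo$ up to $O(\eps^2)$ on states in the range of $g_\alpha(\htwo)$. The plan is to realize the $6\times 6$ operator $\Ht_1$ as a block operator with a large diagonal block (the $\pm 1 + D\cdot\sigma$ replicas) separated in energy from the central $D\cdot\sigma$ replica, and to perform an explicit Schur-complement / Feshbach elimination of the outer replicas, controlled by the resolvent cutoff $\chi_\beta(\Ht_1)$. Concretely, write $\Ht_1 = \Ht' + \eps\Bt$ with $\Ht'$ block-diagonal, let $P_0$ be the projector onto the central replica and $Q = I - P_0$ onto the outer two replicas; on the range of $\chi_\beta(\Ht_1)$ the operator $Q\Ht' Q = Q(\pm1 + D\cdot\sigma)Q$ is boundedly invertible (spectrum bounded away from the $O(\eps^\beta)$ window), so the standard second-order perturbative reduction applies and produces exactly $D\cdot\sigma + \eps^2(B^*B - BB^*) = \htwo$ on the central block, with the cross terms $P_0\eps\Bt Q (Q\Ht'Q)^{-1} Q \eps\Bt P_0$ giving the $\htwo_1$ correction and all higher corrections being $O(\eps^3)$.

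\textbf{Key steps, in order.} First, I would record the elementary identity $[e^{-i\tau\Ht_1}\chi_\beta(\Ht_1)]_{00} = P_0 e^{-i\tau\Ht_1}\chi_\beta(\Ht_1) P_0$ (viewing $P_0$ also as the inclusion $\Hilbert \hookrightarrow \HilbertF_1$), and observe via the Helffer--Sj\"ostrand representation of $\chi_\beta(\Ht_1)$ that this reduces everything to estimating the compressed resolvent $P_0(z - \Ht_1)^{-1}P_0$ for $z$ in the relevant complex neighborhood of the support of $\chi_\beta(\cdot\eps^{-\beta})$, i.e. $|z| \lesssim \eps^\beta$ after unscaling. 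Second, I would expand the Schur complement: $P_0(z-\Ht_1)^{-1}P_0 = (z - D\cdot\sigma - \eps^2 \Sigma(z))^{-1}$ where $\Sigma(z) = P_0\Bt Q(Q(\Ht' - z)Q)^{-1} Q\Bt P_0$, noting $Q(\Ht'-z)Q$ has the explicit form $\mathrm{diag}(1+D\cdot\sigma - z,\,-1+D\cdot\sigma - z)$ which is invertible with norm $O(1)$ uniformly for small $z$; then $\Sigma(0) = B^*B - BB^*$ exactly (using $\Bt$ has only the $B,B^*$ off-diagonal entries), and $\Sigma(z) - \Sigma(0) = O(\eps^\beta)$ on the contour, so $P_0(z-\Ht_1)^{-1}P_0 = (z-\htwo)^{-1} + O(\eps^{4-\beta})$-type corrections that, after integrating $\bar\partial\tilde\chi$, yield the $O(\eps^2)$ bound — here the regularity assumption \eqref{assump:regularity} is what lets me treat $Q\Bt P_0$ composed with the outer resolvent and then project back without losing powers of $\eps$ from the unbounded $D\cdot\sigma$ in the outer blocks, because the cutoffs localize $D\cdot\sigma$ to energies where $B$ couples only exponentially weakly. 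Third, I would insert the factor $g_\alpha(\htwo)$ on the right: since $g_\alpha$ is supported in the $O(\eps^\alpha)$ window and, by Lemma \ref{lemma:first}, $g_\alpha(\htwo)$ is close to $g_\alpha(D\cdot\sigma)$ which localizes momenta, the state $g_\alpha(\htwo)\hat\psi$ lives essentially in the spectral window where all the above resolvent estimates hold, so the difference $U_\htwo(\tau)g_\alpha(\htwo) - P_0 e^{-i\tau\Ht_1}\chi_\beta(\Ht_1)P_0 g_\alpha(\htwo)$ is controlled by the $O(\eps^2)$ resolvent discrepancy integrated against $\tau \leq 2\pi$. Finally, I would use the Duhamel/Dyson bookkeeping exactly as in Theorem \ref{thm:evol_bound} to convert the resolvent estimate for $e^{-i\tau\Ht_1}$ versus $e^{-i\tau\htwo}$ into the stated operator-norm bound on the compressed propagators, absorbing the (bounded) time interval into the constant.

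\textbf{Main obstacle.} The delicate point is controlling the contribution of the \emph{unbounded} outer-replica operators $\pm 1 + D\cdot\sigma$ during the elimination: naively, $(Q(\Ht'-z)Q)^{-1}$ is bounded but composing $B$ with it and then with $D\cdot\sigma$-dependent terms threatens to cost derivatives that are not available. This is precisely why assumption \eqref{assump:regularity} is imposed — it says $B$ essentially commutes with spectral localization of $D\cdot\sigma$ up to exponentially small errors $e^{-c\eps^{-1}d}$, so that sandwiching $B$ between spectral cutoffs of $D\cdot\sigma$ at well-separated energies is negligible. The proof must therefore insert a partition of unity in $D\cdot\sigma$-energy, use \eqref{assump:regularity} to discard the far-off-diagonal (in energy) pieces, and only then run the bounded Schur-complement argument on the near-diagonal pieces; making this localization quantitative, and checking that the discarded terms are $o(\eps^2)$ (which they are, being exponentially small), is the technical heart of the lemma. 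The rest — the algebraic identity $\Sigma(0) = B^*B - BB^*$ and the Helffer--Sj\"ostrand integration — is routine given the preceding lemmas.
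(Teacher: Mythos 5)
Your plan is essentially the paper's own argument: reduce $[e^{-i\tau\Ht_1}\chi_\beta(\Ht_1)]_{00}$ through a Helffer--Sj\"ostrand representation of $x\mapsto e^{-i\tau x}\chi_\beta(x)$ to the compressed resolvent, compute that resolvent by a Schur complement eliminating the $\pm1$ replicas, identify the resulting self-energy with $\eps^2\htwo_1$ after localizing $D\cdot\sigma$ to the low-energy window via \eqref{assump:regularity}, use $g_\alpha(\htwo)=\chi_\beta(\htwo)g_\alpha(\htwo)$ together with Lemma \ref{lemma:first} to place that cutoff, and integrate against $\bar\partial\tilde\chi$ (the paper also excises the strip $|\mathrm{Im}\,z|<\eps^2$ at an admissible $O(\eps^2)$ cost). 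So the approach matches, and your ``main obstacle'' paragraph contains the correct mechanism.

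Two intermediate assertions are, however, wrong as stated and would trivialize the real work if taken literally. First, the Schur complement gives the self-energy $\eps^2\bigl[B(z-1-D\cdot\sigma)^{-1}B^*+B^*(z+1-D\cdot\sigma)^{-1}B\bigr]$, which is \emph{not} exactly $\eps^2(B^*B-BB^*)$: it coincides with $\eps^2\htwo_1$ only after inserting low-energy cutoffs in $D\cdot\sigma$ (pushed through $B,B^*$ via \eqref{assump:regularity}, up to exponentially small terms), and then only up to a relative error $O(\eps^\beta)$; that $O(\eps^\beta)$, against the $\eps^{2-\beta}$ prefactor, is precisely what yields the final $\eps^2$ and is why the bound is not better. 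Second, $Q\Ht'Q$ is not ``boundedly invertible with spectrum bounded away from the $O(\eps^\beta)$ window'': $\mathrm{spec}(\pm1+D\cdot\sigma)=\Rm$, so the outer blocks are not gapped at all; the Schur identity is legitimate only because $z$ is non-real in the Helffer--Sj\"ostrand integral, and the outer resolvents become $O(1)$ only after the momentum localization just described. Finally, the closing appeal to Duhamel as in Theorem \ref{thm:evol_bound} is unnecessary: for $\tau\le2\pi$ the bounded analytic factor $e^{-i\tau z}$ rides along in the Helffer--Sj\"ostrand integral, so the resolvent estimate already is the propagator estimate, which is how the paper concludes.
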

\begin{proof}
We begin by taking $\chi_\beta(x) = \chi(x\eps^{-\beta})$ where $\chi$ is smooth, supported on $[-2 c_0,2 c_0]$, and $\chi(x)=1$ on the support of $g$. We further define $\theta$ such that it is supported on $[-3c_0,3c_0]$ and $\theta(x) = 1$ on the support of $\chi$. Likewise we have $\theta_\beta(x) = \theta(x\eps^{-\beta})$. We observe 
\begin{align*}
&g_\alpha(\htwo) = \chi_\beta(\htwo)g_\alpha(\htwo) ,\qquad
\chi_\beta(\htwo) = \theta_\beta(\htwo)\chi_\beta(\htwo).
\end{align*}
We define
\[
\tilde\htwo_1(z) :=  B(z\eps^{\beta}-[1+D\cdot\sigma])^{-1}B^* + B^*(z\eps^\beta+[1-D\cdot\sigma])^{-1}B.
\]
We define $u_\tau(x) = e^{-i\tau x}$ and $G = \{z :|y|>\eps^2\}$, as we wish to remove the strip $|y| < \eps^2$ so that we can apply the regularity result in Lemma \ref{lemma:first}. We then have
\begin{equation*}
    u_\tau \circ \chi_\beta(\htwo) - [u_\tau \circ \chi_\beta(\Ht_1)]_{00} = \int_{G} u_\tau(z)\bar\partial \tilde \chi(z) \biggl[ (z-\eps^{-\beta}\htwo)^{-1} - [(z-\eps^{-\beta}\Ht_1)^{-1}]_{00}\biggr]dz + O(\eps^{2}).
\end{equation*}
By Schur complements, we obtain
\begin{equation*}
\begin{split}
\int_{G} &u_\tau(z)\bar\partial \tilde \chi(z) \biggl[ (z-\eps^{-\beta}\htwo)^{-1} - [(z-\eps^{-\beta}\Ht_1)^{-1}]_{00}\biggr]dz\\
    &= \int_G u_\tau(z)\bar\partial \tilde \chi(z) \biggl[ (z-\eps^{-\beta}D\cdot\sigma - \eps^{2-\beta}\htwo_1)^{-1} - (z-\eps^{-\beta} D\cdot\sigma - \eps^{2-\beta} \tilde\htwo_1(z))^{-1}\biggr]dz \\
    &= \eps^{2-\beta}\int_G u_\tau(z)\bar\partial \tilde \chi(z) \biggl[ (z-\eps^{-\beta}D\cdot\sigma - \eps^{2-\beta}\tilde\htwo_1(z))^{-1} [\tilde\htwo_1(z) -  \htwo_1] (z-\eps^{-\beta} \htwo)^{-1}\biggr]dz.
\end{split}
\end{equation*}
We thus have:
\begin{equation*}
\begin{split}
([&u_\tau\circ\chi_\beta(\htwo)-[u_\tau\circ\chi_\beta(\Ht_1)]_{00})g_\alpha(\htwo) \\
&=\eps^{2-\beta}\int_G u_\tau(z) \bar\partial \tilde \chi(z) \biggl[ (z-\eps^{-\beta}D\cdot\sigma - \eps^{2-\beta}\tilde\htwo_1(z))^{-1} [\tilde\htwo_1(z) -  \htwo_1] \chi_{\beta}(\htwo)(z-\eps^{-\beta} \htwo)^{-1}\biggr]g_\alpha(\htwo)dz\\
& \hspace{.3cm} + O(\eps^2).
\end{split}
\end{equation*}
Using \eqref{assump:regularity}, we obtain
\begin{equation}
\label{e:zero_entry}
\begin{split}
\|(u_\tau\circ\chi_\beta(\htwo)-[u_\tau\circ\chi_\beta(H_1)]_{00})g_\alpha(\htwo)\|_\op &\lesssim \varepsilon^{2} \\
&\hspace{.3cm}+ \eps^{2-\beta}\sup_{z \in G}\|(\tilde\htwo_1(z)-\htwo_1)\chi_\beta(D\cdot\sigma)\|_\op.
\end{split}
\end{equation}
We have
\begin{equation*}
    \begin{split}
        (\tilde\htwo_1(z) - \htwo_1)\chi_\beta(D\cdot\sigma) &= B\bigl(I+(z \eps^\beta-[1+D\cdot\sigma])^{-1}\bigr)B^*\chi_\beta(D\cdot\sigma) \\
        &+ B^* \bigl(-I + (z \eps^\beta-[-1+D\cdot\sigma])^{-1}\bigr)B\chi_\beta(D\cdot\sigma).
    \end{split}
\end{equation*}
We show the bound on one of these terms as the argument is identical.  Let $d$ be the distance between $\{x:\chi(x) = 1\}$ and $\{x : \theta(x)=1\}^c$, which by definition is non-zero. We let
\begin{equation}
\label{e:omega_bound}
\omega(x) = [1+(z\eps^\beta - 1 - x)^{-1}]\theta_{\beta}(x).
\end{equation}
Considering $x$ bounded away from $-1$, we observe that
$
|\omega(x)| \lesssim \eps^{\beta}.
$
Then by \eqref{assump:regularity}:
\begin{equation*}
\begin{split}
    B\bigl(I+(z \eps^\beta-&[1+D\cdot\sigma])^{-1}\bigr)B^*\chi_\beta(D\cdot\sigma) \\
    &= B\omega(D\cdot\sigma)B^*\chi_\beta(D\cdot\sigma) + O(e^{-cd\eps^{\beta-1}\log(\eps^{-1})}).
\end{split}
\end{equation*}
Therefore,  by \eqref{e:omega_bound},
\[
\|(\tilde\htwo_1(z) - \htwo_1)\chi_\beta(D\cdot\sigma)\|_\op \lesssim \eps^{\beta}.
\]
Putting this bound back into \eqref{e:zero_entry} we obtain
\[
\|(u_\tau \circ \chi_\beta(\htwo)-[u_\tau \circ \chi_\beta(\Ht_1)]_{00})g_\alpha(\htwo)\|_\op \lesssim \eps^{2}.
\]
This concludes our proof.
\end{proof}
We therefore have by the above two lemmas with $\tau \leq 2\pi$:
\begin{equation}
\label{e:2b2_diff}
\begin{split}
(\widetilde{U}_1(\tau)-U_\htwo(\tau))g_\alpha(\htwo) &=\biggl([e^{-i\tau\Ht_1}]_{00}[\chi_\beta(\Ht_1)]_{00}-[u_\tau\circ\chi_\beta(\Ht_1)]_{00}\biggr)g_\alpha(\htwo)  \\
&\hspace{5mm} + \sum_{k \in \pm 1} e^{ik\tau} [e^{-i\tau \Ht_1}]_{k0} \chi_\beta(D\cdot\sigma)g_\alpha(\htwo) + O(\eps^2) \\
= -\sum_{k \in \pm 1}[e^{-i\tau \Ht_1}]_{0k}[\chi_\beta(\Ht_1)]_{k0}g_\alpha(\htwo) 
&+ \sum_{k \in \pm 1} e^{ik\tau} [e^{-i\tau \Ht_1}]_{k0} \chi_\beta(D\cdot\sigma)g_\alpha(\htwo) + O(\eps^2).
\end{split}
\end{equation}
By a simple Duhamel argument, the second term can be bounded using
\begin{equation*}
    \begin{split}
        [e^{-i\tau \Ht_1}]_{k0}\chi_\beta(D\cdot\sigma) = \biggl[\frac{\eps}{i} \int_0^\tau e^{-i(\tau-s)(D\cdot\sigma+k)}\Bt e^{-i sD\cdot\sigma}ds \biggr]_{k0}\chi_\beta(D\cdot\sigma) + O(\eps^2).
    \end{split}
\end{equation*}
By the same regularity arguments as before, letting $B_1 = B^*$ and $B_{-1} = B$, we obtain
\[
\sum_{k \in \pm 1} e^{ik\tau} [e^{-i\tau H_1}]_{k0} \chi_\beta(D\cdot\sigma) = \eps\sum_{k \in \pm 1} (e^{-i\tau k}-1)\theta_\beta(D\cdot\sigma) B_k \chi_\beta(D\cdot\sigma) + O( \eps^{1+\beta}).
\]
We observe 
$$\eps\theta_\beta(D\cdot\sigma)B_k\chi_\beta(D\cdot\sigma)g_\alpha(\htwo) = \eps B_kg_\alpha(\htwo) + O(\eps^{3-\beta}).$$
Therefore our fluctuation term is given by
\[
u_\htwo(\tau) = \sum_{k \in \pm 1}(1-e^{-i\tau k})B_k.
\]
For the first term on the right-hand side of \eqref{e:2b2_diff}, $[e^{-i\tau H_1}]_{k0} = O(\eps)$ by a simple application of Dumahel's principle. We have by Schur complements
\begin{equation*}
    [\chi_\beta(H_1)]_{k0} = \eps^{2-\beta}\int_\Cm \bar\partial\tilde\chi(z) (z - \eps^{-\beta}(k+D\cdot\sigma))^{-1} B_k (z- \eps^{-\beta}D\cdot\sigma - \eps^{2-\beta}\tilde\htwo_1(z))^{-1}dz.
\end{equation*}
We therefore have
\[
\sum_{k \in \pm 1}[e^{-i\tau \Ht_1}]_{0k}[\chi_\beta(\Ht_1)]_{k0}g_\alpha(\htwo) = O(\eps^{3-\beta}).
\]
Putting all the collected error bounds together, we obtain for $\tau \in [0,2\pi)$ the result:
\[
\| \bigl(U(\tau) - U_\htwo(\tau)-\eps u_\htwo(\tau)\bigr)g_\alpha(\htwo)\|_\op \lesssim  \eps^{\beta+1}.
\]
The theorem result follows.
\end{proof}
Given the rescaling \ref{eq:rescaling}, an original wave function of the form $\psi(\tilde x)$ in the rescaled units will be $\eps\psi(\eps x)$. This motivates the following corollary when we are interested in considering wave packets:
\begin{corollary}
  Suppose we have a wave function $\psi \in C_c^\infty(\Rm^2;\Cm^2)$. Denote
  \[\psi_\alpha(x) = \eps^{\alpha}\psi(x\eps^{\alpha})\]
  for $\alpha > 0$. Then for any $\beta < \min\{\alpha,1\}$ and $s = \frac{2\beta-\alpha}{\alpha-\beta}$, we have
  \begin{equation*}
      \| (U(\tau)-U_\htwo(\tau)-\eps u_\htwo(\tau))\psi_\alpha\|_{L^2} \leq C_{\beta,\alpha}(\tau+1) \eps^{1+\beta}\|\psi\|_{H^s}.
  \end{equation*}
  This holds for $\tau < c_{\beta,\alpha}\eps^{-(1+\beta)}$, where $c_{\beta,\alpha} \rightarrow 0$ and $C_{\beta,\alpha} \rightarrow \infty$ as $\beta \rightarrow 1$. 
\end{corollary}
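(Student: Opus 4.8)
The plan is to reduce the corollary to Theorem \ref{thm:2b2} by decomposing the rescaled wave packet $\psi_\alpha$ according to the spectral resolution of $\htwo$, separating the part that lies in the low-energy window where Theorem \ref{thm:2b2} applies from a high-energy tail that we control by Sobolev regularity of $\psi$. Concretely, I would pick a smooth cutoff $g$ supported on $[-c_0,c_0]$ with $g=1$ near the origin, form $g_\beta(\htwo)$ with the threshold exponent $\beta$ from the theorem, and write
\begin{equation*}
  (U(\tau)-U_\htwo(\tau)-\eps u_\htwo(\tau))\psi_\alpha
  = (U(\tau)-U_\htwo(\tau)-\eps u_\htwo(\tau))g_\beta(\htwo)\psi_\alpha
  + (U(\tau)-U_\htwo(\tau)-\eps u_\htwo(\tau))(I-g_\beta(\htwo))\psi_\alpha.
\end{equation*}
The first term is handled directly by Theorem \ref{thm:2b2} (with $g_\alpha$ there replaced by the present $g_\beta$, legitimate since $\beta<\min\{\alpha,1\}$ puts us in the regime where the theorem's hypothesis on the support exponent is met), giving a bound $C_\beta(\tau+1)\eps^{1+\beta}\|\psi_\alpha\|_{L^2} = C_\beta(\tau+1)\eps^{1+\beta}\|\psi\|_{L^2}$ since the $L^2$ norm is scale-invariant under $\psi\mapsto\eps^\alpha\psi(\eps^\alpha\cdot)$.

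For the second (high-energy) term I would use that $U$, $U_\htwo$ are unitary and $\eps u_\htwo(\tau)$ is bounded by $O(\eps)$ in operator norm, so it suffices to bound $\|(I-g_\beta(\htwo))\psi_\alpha\|_{L^2}$ and show it is $O(\eps^{1+\beta}\|\psi\|_{H^s})$. Here $I-g_\beta(\htwo)$ is supported on $|\lambda|\gtrsim\eps^\beta$ in the spectrum of $\htwo$; since $\htwo = D\cdot\sigma + \eps^2\htwo_1$ and $\htwo_1$ is bounded, for $\eps$ small this spectral region corresponds to $|D|\gtrsim\eps^\beta$ up to $O(\eps^2)$ corrections (which I'd make precise with Lemma \ref{lemma:first} or a direct resolvent comparison). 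On the Fourier side, $\widehat{\psi_\alpha}(\xi) = \eps^{-\alpha}\widehat\psi(\eps^{-\alpha}\xi)$, so the mass of $\psi_\alpha$ at frequencies $|\xi|\gtrsim\eps^\beta$ is the mass of $\psi$ at frequencies $|\eta|\gtrsim\eps^{\beta-\alpha} = \eps^{-(\alpha-\beta)}$, a large frequency since $\alpha>\beta$. By the definition of the $H^s$ norm,
\begin{equation*}
  \|(I-g_\beta(\htwo))\psi_\alpha\|_{L^2}^2 \lesssim \int_{|\eta|\gtrsim \eps^{-(\alpha-\beta)}} |\widehat\psi(\eta)|^2 d\eta \lesssim \eps^{2s(\alpha-\beta)}\|\psi\|_{H^s}^2,
\end{equation*}
so this term is $O(\eps^{s(\alpha-\beta)}\|\psi\|_{H^s})$. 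Matching the exponent to $\eps^{1+\beta}$ forces $s(\alpha-\beta) = 1+\beta$... but I should instead match to the error already produced by the first term, namely $\eps^{1+\beta}$; actually the bookkeeping gives $s(\alpha-\beta)\geq 1+\beta$, and solving $s = \frac{1+\beta}{\alpha-\beta}$ — comparing with the stated $s=\frac{2\beta-\alpha}{\alpha-\beta}$ suggests the intended matching is against a different intermediate power (the $O(\eps^{2-\beta})$-type terms appearing inside the proof of Theorem \ref{thm:2b2}), so I would track constants carefully to land exactly on $s = \frac{2\beta-\alpha}{\alpha-\beta}$.

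The main obstacle I anticipate is precisely this last bookkeeping: getting the frequency-cutoff scale, the spectral-vs-Fourier comparison error ($O(\eps^2)$ from replacing $\htwo$ by $D\cdot\sigma$ in the cutoff), and the tail decay rate to combine into the clean exponent relation $s=\frac{2\beta-\alpha}{\alpha-\beta}$ stated in the corollary, rather than a messier or more conservative exponent. A secondary subtlety is ensuring the cutoff $g_\beta(\htwo)$ can legitimately play the role of $g_\alpha(\htwo)$ in Theorem \ref{thm:2b2} uniformly — i.e. that the constants $C_\beta,c_\beta$ there do not degrade when the support exponent is taken to be $\beta<1$ rather than some $\alpha$; but since the theorem is already stated with the threshold $\beta$ governing the final bound, this should go through with the same constants. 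Everything else (unitarity of $U$, $U_\htwo$; the $O(\eps)$ bound on $u_\htwo$; the time-range restriction $\tau<c_{\beta,\alpha}\eps^{-(1+\beta)}$ inherited from the theorem) is routine.
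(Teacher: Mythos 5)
Your overall strategy is the same as the paper's: split $\psi_\alpha$ at the frequency scale $\eps^\beta$, send the low-energy piece to Theorem \ref{thm:2b2} (with cutoff exponent $\beta<1$, which is legitimate), control the mismatch $\chi_\beta(\htwo)-\chi_\beta(D\cdot\sigma)$ via Lemma \ref{lemma:first}, and estimate the high-frequency remainder by a Fourier/Sobolev tail bound using $\hat\psi_\alpha(\xi)=\eps^{-\alpha}\hat\psi(\xi\eps^{-\alpha})$. Whether one first cuts with $g_\beta(\htwo)$ (your version) or with $\chi_\beta(D\cdot\sigma)$ (the paper's version) is immaterial once the cutoff-comparison lemma is invoked.

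However, there is a genuine gap, and you flag it yourself without resolving it: the treatment of the high-frequency term. You propose to bound $(U(\tau)-U_\htwo(\tau)-\eps u_\htwo(\tau))$ on that piece only by unitarity plus the $O(\eps)$ bound on $u_\htwo$, i.e.\ by an $O(1)$ operator norm. Then the tail itself must supply the full factor $\eps^{1+\beta}$, which is exactly why your bookkeeping lands on $s(\alpha-\beta)=1+\beta$ rather than the stated $s=\frac{2\beta-\alpha}{\alpha-\beta}$, and your closing ``track constants carefully'' does not close this. The missing ingredient is that the Duhamel estimate $\|U(\tau)-U_\htwo(\tau)-\eps u_\htwo(\tau)\|_\op\lesssim\eps(\tau+1)$ --- already recorded at the start of the proof of Theorem \ref{thm:2b2} (the bound $\delta=O(\eps\tau)$, propagated over periods) --- must also be used on the high-frequency piece. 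This buys an extra factor $\eps(\tau+1)$, so the tail only needs to be $O(\eps^{\beta}\|\psi\|_{H^s})$; the paper's scaling estimate gives $\|(1-\chi_\beta)(D\cdot\sigma)\psi_\alpha\|_{L^2}\lesssim\eps^{(\alpha-\beta)(1+s)}\|\psi\|_{H^s}$, and the exponent relation $(\alpha-\beta)(1+s)=\beta$ is precisely what the stated $s=\frac{2\beta-\alpha}{\alpha-\beta}$ encodes, yielding the claimed $(\tau+1)\eps^{1+\beta}\|\psi\|_{H^s}$. (Your comparison term is fine: Lemma \ref{lemma:first} gives $O(\eps^{2-\beta})$ for the cutoff difference, and combined with the same Duhamel factor this is the paper's $O(\eps^{3-\beta})$, which is negligible since $\beta<1$.) So the route is the paper's, but the decisive step that produces the Sobolev index $s$ --- the heart of the corollary --- is missing from your argument.
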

Note that in dimension $d=2$, $\|\psi_\alpha\|_{L^2(\Rm^2;\Cm^2)} = \|\psi\|_{L^2(\Rm^2;\Cm^2)}$ independently of $\alpha$ and $\eps$.
\begin{proof}
We note that
$
\hat\psi_\alpha(\xi) = \eps^{-\alpha}\hat\psi(\xi \eps^{-\alpha}).
$
Let $\beta < \alpha$ and consider $\chi$ supported on $[-c_0,c_0]$ for some $c_0 > 0$ and $\chi(x) = 1$ on $\frac{1}{2}[-c_0,c_0]$, $\chi \geq 0$. Then we have a partition of unity using $\chi_\beta$ and $1-\chi_\beta$. We observe
\begin{equation*}
\begin{split}
    \| (1-\chi_\beta)(D\cdot\sigma)\psi_\alpha\|_{L_2} & \leq \biggl(\int_{|\xi| > \eps^{\beta}c_0/2} |\hat \psi_\alpha(\xi)|^2d\xi\biggr)^{1/2} \\
    & \leq \eps^{\alpha-\beta} \biggl(\int_{|\xi| > \eps^{(\beta-\alpha)}c_0/2} |\hat\psi(\xi)|^2d\xi\biggr)^{1/2} 
    \lesssim \eps^{( \alpha-\beta)(1+s)} \|\psi\|_{H^s}.
\end{split}
\end{equation*}
Using Lemma \ref{lemma:first} and a basic Duhamel argument, we have
\begin{equation*}
    \| (U(\tau) - U_\htwo(\tau)-\eps u_\htwo(\tau))(\chi_\beta(D\cdot\sigma)-\chi_\beta(\htwo))\|_\op \lesssim \eps^{3-\beta}.
\end{equation*}
If we write
\begin{equation*}
     (U(\tau) - U_\htwo(\tau)-\eps u_\htwo(\tau))\psi_\alpha = \biggl(U(\tau) - U_\htwo(\tau)-\eps u_\htwo(\tau)\biggr)\cdot\biggl((1-\chi_\beta)(D\cdot\sigma) + \chi_\beta(\htwo) + [\chi_\beta(D\cdot\sigma) - \chi_\beta(\htwo)]\biggr)\psi_\alpha,
\end{equation*}
the theorem statement instantly follows from the above estimates and Theorem \ref{thm:2b2}.
\end{proof}
As before, we use a heuristic argument based on the evolution bounds to justify the current formula for densities corresponding to $\htwo$.
Let 
$$\rho = g_\alpha(\htwo).$$
Then we have up to $O(\eps^{1+\beta})$
\begin{equation*}
    \begin{split}
        \rho(\tau) &= (U_\htwo(\tau) + \eps u_\htwo(\tau))\rho(U_\htwo^*(\tau) + \eps u_\htwo^*(\tau)) \\
        &= \rho + \eps (u_\htwo(\tau)\rho + \rho u_\htwo^*(\tau)) + O(\eps^2).
    \end{split}
\end{equation*}
Time averaged over a period of the driving force, we then obtain the current
\[
\sigma_I = \Tr\ i[\htwo,P]g_\alpha(\htwo).
\]
This corresponds to a current of the Dirac model with mass at an interface, which is analyzed in \cite{bal2}. This shall be considered more in the next section.

\section{Replica Topologies}
\label{sec:ti}
The evolution operator $U(\tau)$ is well-approximated for large but not-too-large times either by $(2n+1)-$replica models or by the central $2\times2$ system considered in the preceding section. In this section, we show that all these levels of approximations involve Hamiltonians with precise topological invariants of the form of bulk-difference invariants or interface conductivities. Moreover, the invariants strongly depend on $n$ with values that diverge as $n\to\infty$. This gives an example of different levels of approximation of the unitary $U(\tau)$ displaying different values of the topological invariant. 

Following \cite{bal3}, we first compute the bulk-difference invariants of the different approximations in section \ref{sec:bulk} and then show that the bulk-interface correspondence in \cite{bal3} applies to such approximations in section \ref{sec:cond}.
\subsection{Bulk calculations}
\label{sec:bulk}
We first focus on calculating the bulk invariants corresponding to $\Ht_n$ from the previous section. We define the bulk infinite matrix for the Bloch wave $\xi$:
\begin{equation*}
    (\Ht \psi)_n = (n + \xi\sigma)\psi_n + \varepsilon (B \psi_{n-1} + B^*\psi_{n+1}).
\end{equation*}
Here, $\xi\sigma\equiv\xi\cdot\sigma=\xi_1\sigma_1+\xi_2\sigma_2$ and $\Ht = \Ht(\xi)$.

The truncated Hamiltonians are simply projecting $\Ht$ onto $S_n$ replicas, i.e. $\Ht_n = \emb_n^*\Ht\emb_n$, where here we define the embedding $\emb_n : \ell^2(S_n \otimes \Cm^2) \rightarrow \ell^2(\Zm\otimes \Cm^2)$ in parallel to before by
\begin{equation*}
    (\emb_n\psi)_k = \delta_{|k| \leq n} \psi_k,
\end{equation*}
where $\psi = (\psi_{n},\cdots,\psi_{-n})$, $\psi_j \in \Cm^2$.
For $n=1$, we have the $3=(2n+1)-$replica $6\times6$ model
\[
  \left( \begin{matrix}
    1+\xi\sigma & \eps B^* & 0 \\ \eps B & \xi\sigma & \eps B^* \\ 0 & \eps B & -1+\xi\sigma 
\end{matrix} \right) \psi = E\psi.
\]
We consider
\[
B_m := \frac{1}{2}(1+m)\begin{pmatrix} 0 & 1 \\ 0 & 0\end{pmatrix} + \frac{1}{2}(1-m)\begin{pmatrix} 0 & 0 \\ 1 & 0\end{pmatrix},
\]
with $m$ a constant mass term. We will also use $B = B_m$. We will focus on $m$ sufficiently close to $\pm 1$ as analyzing general $m \neq 0$ makes it difficult to prove the existence of a gap for $\Ht_n$. In more physical terms, we assume the laser is close to circularly polarized, and only allow slight ellipticity. We shall show that a gap opens at $E=0$
(See Figure \ref{fig:gaps} below).
\begin{figure}[ht]
\begin{subfigure}{.45\textwidth}
\centering
\vspace{.5cm}
\includegraphics[width=.9\textwidth]{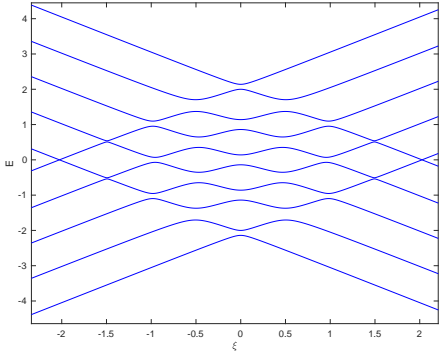}
\caption{Here we plot the bulk band structure of $\Ht_2$ on a line-cut through $\xi_2 = 0$. We see a sequence of shrinking gaps at $E=0$.}
\end{subfigure}
\begin{subfigure}{.45\textwidth}
\centering
\includegraphics[width=.9\textwidth]{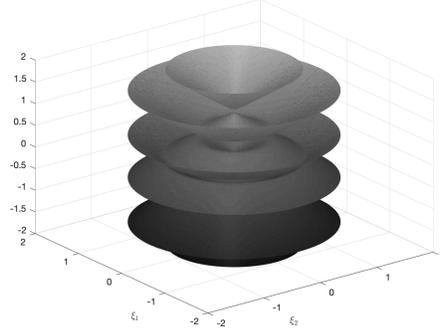}
\caption{Band structure of $\Ht_1$.}
\end{subfigure}
\caption{Cross-section for $\Ht_2$ and two-dimensional band structure for $\Ht_1$.}
\label{fig:gaps}
\end{figure}

\medskip
As a consequence, we will be able to define a bulk-difference invariant. 
 To do this, we define the eigenpairs of $\Ht_n(\xi)$ by $(h^{(i)},\psi^{(i)})$, $1\leq i\leq 2(2n+1)$. We have $2n+1$ branches $h^{(i)}>0$ for $i > 2n+1$ and $h^{(i)}<0$ for $i \leq 2n+1$. The calculation of the invariant under the assumption there is a gap in the vicinity of $E=0$ is given by  \cite[Equ. (23)]{bal3}
\begin{equation}\label{eq:WnTn}
  W_n =  \frac{i}{8\pi^2} \int T_n(\xi) d^2\xi
\end{equation}
with
\begin{equation}\label{eq:T}
  T_n(\xi)=4\pi i \sum_{i<j} \frac{\sgn{h^{(i)}}-\sgn{h^{(j)}}}{(h^{(i)}-h^{(j)})^2} \Im(\aver{\psi^{(i)}, \partial_1 \Ht_n(\xi) \psi^{(j)}}\aver{\psi^{(j)}, \partial_2 \Ht_n(\xi) \psi^{(i)}} ).
\end{equation}
This Kubo-type formula is arguably one of the simplest to use in the computation of the invariant.
We will however also need to use the form
\begin{equation*}
    W_n = i \sum_{j > 2n+1}\int_{\Rm^2} d(\psi^{(j)},d\psi^{(j)}),
\end{equation*}
see \cite{Hughes_2013,fruchart_2013}.

This invariant is not guaranteed to be integer valued as it is defined with an integration over a non-compact cycle $\Rm^2$. One way to remedy this situation is to construct a bulk-difference invariant \cite{bal3}.

We consider a value $m_0$ near $1$, and denote $W_n^+ = W_n$ for $m = m_0$ while we denote $W_n^- = W_n$ when $m = -m_0$. We now follow the gluing procedure in \cite{bal3}, to which we refer for details, to define the bulk-difference invariant. The projections onto the negative spectrum $\Pi_{\pm}(\xi)$ of the corresponding Hamiltonians $\Ht_n(\xi)$ with mass term $m = \pm m_0$ are easily seen to be independent of $\eps$ and $m$ as $|\xi|\to\infty$. Gluing the two planes $\xi\in\Rm^2$, each projected on half of a two-dimensional sphere $\Sm^2$, along the circle at infinity, we thus obtain a projector defined on a compact manifold (a sphere). This shows that 
\begin{equation*}
    W_{n}^d = W_n^- - W_n^+
\end{equation*}
is well-defined as an integer-valued bulk-difference invariant on the sphere. Since $\Ht_n$'s gap does not change as $m$ varies near $\pm 1$ (the existence of which has yet to be shown), the invariant does not change as $m_0$ varies continuously near $1$ and $\eps$ varies continuously in $(0,\eps_0)$ for $\eps_0$ sufficiently small.
\begin{thm}
 For $m_0$ sufficiently close to $1$ and $0<\eps$ sufficiently small, the bulk-difference invariant is well defined (i.e. there is a bulk gap at $E=0$), and is given by
\begin{equation}
W_n^d = 1 -2n(n+1).
\end{equation}
\end{thm}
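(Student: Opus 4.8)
The plan is to compute $W_n^d = W_n^- - W_n^+$ directly from the Kubo-type formula \eqref{eq:WnTn}--\eqref{eq:T}, exploiting as much structure of $\Ht_n(\xi)$ as possible so that the integral over $\Rm^2$ reduces to an explicitly summable quantity. The first task is to establish the bulk gap at $E=0$: for $m_0$ close to $1$, the coupling $B = B_{m_0}$ is close to the nilpotent matrix $B_0$, so at $\eps = 0$ the spectrum of $\Ht_n(\xi)$ consists of the shifted Dirac cones $\{j \pm |\xi|\}_{j=-n}^n$, which has a gap at $E=0$ of size controlled by $\mathrm{dist}(0,\{j\pm|\xi|\})$; this degenerates only near $|\xi|$ an integer or half-integer, so one must check that the $O(\eps)$ perturbation, together with the near-circular polarization, does not close the gap at those resonant momenta. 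I would handle this by a Schur-complement / perturbative argument at the finitely many resonant shells, using the explicit form of $B_{m_0}$ to see that the relevant $2\times 2$ effective block acquires a gap of order $\eps$ (this is the same mechanism that opens the $O(\eps^2)$ gap for $\Ht_1$ — indeed the $E=0$ gap comes from second-order coupling of the $n=0$ replica to the $\pm 1$ replicas).

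Next, with the gap secured, I would diagonalize $\Ht_n(\xi)$ as explicitly as possible. Because $\xi\sigma$ has eigenvalues $\pm|\xi|$ with eigenvectors depending only on the angle of $\xi$, one can rotate to a basis in which $\Ht_n(\xi)$ becomes a direct sum (or near-direct-sum) of two scalar tridiagonal $(2n+1)\times(2n+1)$ problems coupled only through the off-diagonal structure of $B_{m_0}$; at $m_0 = \pm 1$ exactly, $B$ is a pure raising or lowering operator, so $\Ht_n$ becomes genuinely block-tridiagonal with nilpotent coupling and the eigenvectors can be written in closed form in terms of the $\pm|\xi|$ eigenvectors and a fixed set of $\eps$-dependent mixing coefficients. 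The sign pattern $\sgn h^{(i)}$ is then determined by which shifted cone each branch came from. I would plug this into $T_n(\xi)$; the antisymmetrized product $\Im(\langle\psi^{(i)},\partial_1\Ht_n\psi^{(j)}\rangle\langle\psi^{(j)},\partial_2\Ht_n\psi^{(i)}\rangle)$ only involves $\partial_k(\xi\sigma)$ (since $B$ is $\xi$-independent), which is the same structure as in the single Dirac computation of \eqref{eq:valsig}, so each resonant pair $(i,j)$ contributes a half-integer "jump" weighted by an angular integral that evaluates to a standard solid-angle term.

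The computation of $W_n^\pm$ separately will not be integer-valued, but the difference localizes: the integrand $T_n^- - T_n^+$ should, after the angular integration, reduce to a sum over pairs of replica indices of terms like $\pm\tfrac12$ coming from the sign changes of the effective mass across $m_0 \to -m_0$, exactly as in \eqref{eq:valsig}. The key combinatorial step is then to count these contributions. The naive count from the $n=0$ cone against itself gives the $+1$ (the analogue of the basic Dirac insulator), and the coupling of replica $j$ to replica $-j$ (or $j$ to $j\pm 1$, depending on whether one works at the $E=0$ or $E=1/2$ gap) through the circular polarization contributes, for each of the $n$ positive levels, a term proportional to $-2(j)$ summing to $-2\cdot\tfrac{n(n+1)}{2}\cdot 2 = -2n(n+1)$; making this bookkeeping precise — getting the factor of $2$ and the quadratic growth right — is where I expect the real work to be, and I would cross-check the result against the alternative formula $W_n = i\sum_{j>2n+1}\int d(\psi^{(j)},d\psi^{(j)})$ and against the $n=1$ value $W_1^d = 1 - 4 = -3$ from the explicit $6\times 6$ model.

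\textbf{Main obstacle.} The principal difficulty is the bookkeeping of which branch pairs contribute and with what sign and multiplicity: the formula \eqref{eq:T} sums over all $i<j$, but only the near-degenerate (resonant) pairs at $E=0$ contribute non-negligibly after the $\eps\to 0$ limit, and correctly identifying these pairs — together with tracking how the eigenvector mixing coefficients and the $\sgn h$ pattern conspire to give a clean half-integer per pair — is the delicate part. A secondary, more technical obstacle is justifying the interchange of the $\eps\to 0$ limit with the $\xi$-integration near the resonant shells, where individual terms in $T_n$ blow up like $(h^{(i)}-h^{(j)})^{-2}$ while the numerator vanishes; this requires a uniform bound of the type used in \cite{bal3}, localizing the integrand to $O(\eps)$-neighborhoods of the resonant circles and showing the contribution there is $O(1)$ and converges to the claimed solid-angle value.
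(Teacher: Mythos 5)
Your skeleton matches the paper's strategy in outline: localize the $\eps\to 0$ contribution at the resonant shells $|\xi|\approx \ell$, $0\le \ell\le n$, reduce to effective $2\times 2$ blocks, collect $+1$ from the central ball and ring contributions summing to $-2n(n+1)$, and cross-check against $n=1$. But there is a genuine gap at the heart of the argument: you never derive \emph{why} the ring at radius $\ell$ contributes $-4\ell$; you only fit the quadratic growth to the announced answer and flag the bookkeeping as ``the real work.'' The mechanism is not an accumulation of half-integer jumps over many pairs: at $E=0$ and $|\xi|\approx\ell$ exactly one pair of branches is nearly degenerate, namely those coming from replicas $+\ell$ and $-\ell$. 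The factor $\ell$ comes from the structure of their effective coupling. Eliminating the intermediate replicas $-\ell+1,\dots,\ell-1$ by the Schur-type expansion gives an off-diagonal block of order $\eps^{2\ell}$ (a $2\ell$-th order process), which at $m=\pm1$ equals $-m\,c(\xi)\,\hat\xi^{\,2m\ell}$ with $c(\xi)=\prod_{k=-\ell+1}^{\ell-1}|\xi|/(|\xi|^2-k^2)$. The crucial feature is the phase $\hat\xi^{\pm 2\ell}$: the local model is $(\ell-|\xi|)\sigma_3+\eps^{2\ell}\bigl(\cos(2\ell\theta)\sigma_1\mp\sin(2\ell\theta)\sigma_2\bigr)$ up to deformation, i.e.\ a mass inversion against a coupling that winds $2\ell$ times, and this winding is what produces a contribution of magnitude $2\ell$ for each sign of $m$, hence $-4\ell$ in the bulk difference. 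Relatedly, your assertion that the effective block ``acquires a gap of order $\eps$'' at the resonant shells is incorrect: the gap is $O(\eps^{2\ell})$ at the $\ell$-th ring (and $O(\eps^{2})$ at $\xi=0$), and this scaling dictates the correct local coordinates $|\xi|=\ell+\eps^{2\ell}r$ in which the limiting invariant is computed. (Also, half-integer $|\xi|$ is irrelevant for the $E=0$ gap; those resonances sit at quasi-energy $1/2$.)

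Two further points. First, instead of justifying an interchange of the $\eps\to0$ limit with the $\xi$-integration by uniform bounds near the shells, the paper's route is simpler: the bulk-difference invariant is integer valued (via the gluing construction on the sphere) and constant in $\eps\in(0,\eps_0)$ and in $m_0$ near $1$, so any contribution that is $o(1)$ as $\eps\to0$ can be discarded outright and only the limiting local ring models need to be evaluated. Second, if you reduce from the $4\times4$ block to the two near-zero branches, the frame $V(\xi)=(v_1,v_2)$ is $\xi$-dependent, and you must check that this change of frame does not alter the invariant; the paper verifies this at the level of connections, using $V^*dV=\hat\xi\,d\overline{\hat\xi}$ and the vanishing of $\int_{\partial\mathcal{C}_\ell}\hat\xi\,d\overline{\hat\xi}$ over the two oppositely oriented boundary circles. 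Your alternative of working with closed-form eigenvectors of the full $(2n+1)$-replica model could in principle bypass this step, but only if you actually carry out the $2\ell$-th order diagonalization that produces the $\hat\xi^{\pm2\ell}$ phase --- which is precisely the missing core computation.
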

\begin{proof}

At $\eps=0$, we observe for the unperturbed problem that 
\[
\aver{\psi^{(i)}, \partial_k \Ht_n(\xi) \psi^{(j)}}
\]
are purely imaginary so that the product of two such terms is purely real and the imaginary part in \eqref{eq:T} vanishes. The terms involving the imaginary parts are therefore vanishingly small with $\eps$. The only way to get a non-vanishing invariant is therefore when $h^{(i)}-h^{(j)}$ is small (with one positive and the other one negative). 
This occurs only for the two closest eigenvalue sheets about $0$, and only when $|\xi|$ is close to $\ell$ for $0\leq \ell\leq n$ (see Fig. \ref{fig:gaps}). Since the invariant does not change under continuous deformations that leave the gap open, we will consider the limit $\eps \rightarrow 0$ to compute the invariants. In particular, each ring $\ell$ (or ball when $\ell = 0$) in the limit have a bulk-difference invariant $\omega_\ell$ of their own that contributes to the total bulk-difference invariant $W_n^d$. We will show that $\omega_0 = 1$, $\omega_\ell = -4\ell$ when $l\geq1$, and
\[
W_n^d = \sum_{\ell = 0}^n \omega_\ell.
\]

We break the proof into three steps, which we outline as follows:
\begin{enumerate}
    \item For each ring (or ball) $0 \leq \ell \leq n$, we find a $2\times 2$ Hamiltonian $H_{2 \times 2}$ that controls the spectra to leading order in the gap size near $|\xi| \sim \ell$. When $\ell \neq 0$, we first introduce a $4\times 4$ Hamiltonian $H_{4\times 4}$.
    \item For the $m$ range of interest,  we verify a gap opening which scales as $\eps^{2\ell}$ near $|\xi| \sim \ell \neq 0$, and as $\eps^2$ near $|\xi| \sim 0$.
    \item We then compute the bulk-difference invariant for $H_{2\times 2}$ in the small $\eps$ limit $\omega_\ell$, and show it corresponds to the  contribution to $W_n^d$ from the ring in the small $\eps$ limit, $W_n^d = \sum_{\ell = 0}^n \omega_\ell$. For $\ell \neq 0$, this is done via $H_{4\times 4}$.
\end{enumerate}
We will focus primarily on the computations for $\ell \neq 0$ as they are the most intricate, and mention briefly at the end how to compute the contribution $\ell = 0$.
We also emphasize again that the term {\em replica} refers to the degrees of freedom of the vector space corresponding to a specific Fourier mode $\ell \in S_n$. In particular, a single replica $\ell$ corresponds to vectors $e_\ell \otimes \psi_\ell \in \Cm^{S_n \otimes \{1,2\}}$, $e_\ell$ a standard basis vector in $\Cm^{S_n}$ and $\psi_\ell \in \Cm^2$.

\medskip 
{\bf Step 1:} Constructing $H_{2\times 2}$ and $H_{4\times 4}$.

We consider $|\xi| \sim \ell \neq 0$. When building our leading order Hamiltonian approximations, we use that the spectrum is symmetric across $E=0$ and eigenvalues come in pairs $\pm E$. To verify this, we let
$\Theta$ be the matrix mapping $\Cm^{S_n}$ to itself defined by $\Theta_{ij} = \delta_{i+j}$. Then we let $\Theta' = \Theta\otimes \sigma_2$. We can then compute
  \[
  \Theta'\Ht_n(\xi)\Theta' = -\Ht_n(\bar\xi).
  \]
  However, these two operators have the same spectrum by simple conjugation so that the spectrum is symmetric across $E=0$.

We next wish to verify that there is a gap locally and we will verify the gap scales as $\varepsilon^{2\ell}$. To do this, we consider the eigenvalues $\pm E$ near $0$. When perturbation is turned off, the eigenstates correspond to replica $\ell$ and $-\ell$. When the perturbation is turned on, we will see the gap is opened by coupling between these two replicas via interaction through all the replicas in between (i.e. $-\ell+1,\cdots \ell -1$). 
To build $H_{4\times 4}$, we construct a leading order coupling of the $\pm \ell$ replicas. We use the following expression:
\begin{align}
\label{e:expand}
   & \psi_k = \varepsilon(k - E + \xi \cdot \sigma)^{-1} (\delta_{k < n}B \psi_{k+1} + \delta_{k > -n}B^*\psi_{k-1}),& |k| \neq \ell
\end{align}
to find the leading order coupling between blocks $\pm \ell$:
\[
  \psi_{\ell-1} = -\eps^{2\ell-1} (\ell-1-E+\xi\sigma)^{-1}B^*\ldots (-\ell+1-E+\xi\sigma)^{-1}B^* \psi_{-\ell} + O(\varepsilon^{2\ell}).
\]
Using a similar equation for $\psi_{1-\ell}$, we apply \eqref{e:expand} until only $\psi_\ell$ and $\psi_{-\ell}$ terms remain, and get the approximate eigenvalue problem up to $O(\eps^{2\ell+1})$ given by
\begin{align*}
   & E\psi_{-\ell,\ell} = H_{4\times 4}\psi_{\ell,-\ell} \\
   &H_{4\times 4} = \left(\begin{matrix}H_{\ell,\ell} & H_{-\ell,\ell}^*   \\ H_{-\ell,\ell} & H_{-\ell,-\ell}\end{matrix} \right) \\
   &H_{\pm \ell,\pm\ell} = \pm\ell+\xi\sigma +  K_{\pm\ell} \\
   &H_{-\ell,\ell} =-\eps^{2\ell}\bigl(\prod_{k=1-\ell}^{\ell-1}B(k+\xi\sigma)^{-1}\bigr)B.
\end{align*}
Here $K_{\pm\ell} = K_{\pm\ell}(\xi)$ include all terms in the expansion of order $\varepsilon^j$ for $1 \leq j \leq 2\ell$. $H_{-\ell,\ell}$ is the leading order coupling term between replicas $-\ell$ and $\ell$. We keep in mind that the point of this Hamiltonian is to approximate the two eigenvalues nearest $0$; the other two are not of interest for computing invariants as their contribution to $T_n(\xi)$ is negligible. As a consequence, we build a second Hamiltonian by projecting onto the two eigenstates with energies near $0$.
 Let 
\begin{equation}\label{eq:rotationxi}
    \tilde v_1 = \frac{1}{\sqrt{2}}\begin{pmatrix} \overline{\hat{\xi}} \\ 1\end{pmatrix}, \quad \tilde v_2 = \frac{1}{\sqrt{2}}\begin{pmatrix} -\overline{\hat{\xi}} \\ 1 \\ \end{pmatrix},\quad v_1= \begin{pmatrix} 0 \\ 1\end{pmatrix} \otimes \tilde v_1, \ \  \mbox{ and } \ \  v_2 = \begin{pmatrix} 1 \\ 0 \end{pmatrix} \otimes \tilde v_2.
\end{equation}
Then our reduced matrix to leading orders using $\eta = \eps^{-2\ell}\tilde v_1^*H_{-\ell,\ell}\tilde v_2$ is given by
\begin{equation*}
\begin{split}
    H_{2\times 2} &=  \{ v_i^*H_{4\times 4}v_j\}_{ij} 
    = \begin{pmatrix} \ell - |\xi| +  \kappa(\xi) & \eps^{2\ell}\bar\eta \\ \eps^{2\ell}\eta & -(\ell - |\xi|) - \kappa(\xi) \end{pmatrix}
    = (\ell-|\xi|+\kappa(\xi))\sigma_3  + \eps^{2\ell}\eta\cdot \sigma.
\end{split}
\end{equation*}
Here, $\kappa(\xi) = \tilde v_1^*K_{\ell}(\xi)\tilde v_1 = -\tilde v_2^*K_{-\ell}(\xi)\tilde v_2$. The latter values are the same by symmetry of the spectrum about $E=0$. We observe here that the term $\kappa(\xi)\sigma_3$ cannot open a gap without the $\sigma_1,\sigma_2$ terms as $(\ell-|\xi|)\sigma_3$ moves across the gap in the first diagonal entry and the second, and dominates the influence of $\kappa(\xi)$, which scales as $O(\eps)$. We  note that these terms shift the minimal gap location perturbatively away from $|\xi| = \ell$. 

{\bf Step 2:} Verifying local gap near $|\xi| \sim \ell \neq 0$ scaling as $O(\eps^{2\ell})$. 
We show $\eta \neq 0$ for $m=\pm 1$, which by perturbation theory is sufficient to show that there is a local gap scaling as $O(\eps^{2\ell})$ near the ring with radius $\ell$ for $m$ sufficiently close to $\pm 1$. 
Since the result holds for all values of $l$, this shows a gap opening for $m$ sufficiently close to $\pm1$ and $0<\eps\leq\eps_0$ sufficiently small.
Simultaneously we will explicitly calculate $\eta$ for use in Step 3.
%
%
%

 \begin{prop}
   For any wavenumber $\xi$ with $|\xi|$ close to $l$ and $m = \pm 1$, we obtain that 
   $$\eta = -mc(\xi)\hat\xi^{2m\ell }\qquad \mbox{ with } \qquad 
   c(\xi) = \biggl(\prod_{k=-\ell+1}^{\ell-1}\frac{|\xi|}{|\xi|^2-k^2}\biggr).$$
 \end{prop}
 \begin{proof}
We recall and define the following notation to keep track of our choice of $m$:
\begin{align*}
    B_1 = \begin{pmatrix} 0 & 1 \\ 0 & 0\end{pmatrix}, \quad B_{-1} = \begin{pmatrix} 0 & 0 \\ 1 & 0 \end{pmatrix}, \quad
    &\Lambda_1 = \begin{pmatrix} 0 & 0 \\ 0 & 1 \end{pmatrix}, \quad \Lambda_{-1} = \begin{pmatrix} 1 & 0 \\  0 & 0\end{pmatrix}.
\end{align*}
Observe that for $m \in \{\pm1\}$, $B_m\Lambda_{m} = B_m$, $\Lambda_{m}B_m = 0$, and $(B_m)^2 = 0$.
We also observe 
\begin{align*}
    &\xi \cdot \sigma B_m = \xi_m \Lambda_m 
    &  \xi \cdot \sigma \Lambda_m = \bar\xi_mB_m
\end{align*}
where $\xi_{1} = \xi$ and $\xi_{-1} = \bar \xi$. Then for $\gamma$ any scalar
\begin{equation*}
        (\gamma+\xi\cdot\sigma) \frac{|\xi|}{\gamma^2-|\xi|^2} \biggl(\frac{\gamma}{|\xi|}B_m - \hat\xi_m\Lambda_m\biggr) =\frac{1}{\gamma^2-|\xi|^2}\biggl( \gamma^2B_m + \gamma \xi_m \Lambda_m - \gamma \xi_m\Lambda_m - |\xi|^2 B_m\biggr),
\end{equation*}
and hence
\[
(\gamma+ \xi\cdot\sigma)^{-1}B_m = \frac{-|\xi|}{|\xi|^2-k^2}\biggl(\frac{\gamma}{|\xi|} B_m - \hat\xi_m \Lambda_m\biggr).
\]
Now we observe
\begin{equation}
\label{e:B_Lambda}
        B_m(\frac{\gamma}{|\xi|} B_m - \hat\xi_m \Lambda_m) =-\hat\xi_m B_m.
\end{equation} 
This gives us:
\begin{equation}
\begin{split}
    B_m\prod_{k=-\ell+1}^{\ell-1}B_m(k +\xi\cdot\sigma)^{-1}B_m &= cB_m\prod_{k=-\ell+1}^{\ell-1}\biggl( \frac{k}{|\xi|}B_m - \hat\xi_m \Lambda_m\biggr)
    = (-\hat\xi_m)^{2\ell-1}cB_m.
\end{split}
\end{equation}
Applying $\tilde v_1^*$ to the left and $\tilde v_2$ on the right picks up an additional phase and sign $m\hat\xi_m$, and the proposition is complete.
\end{proof}

{\bf Step 3:} Computing bulk-difference invariants of $H_{2\times 2}$ in the small $\eps$ limit and estimate their contribution to the true bulk-difference invariant.

 Henceforth we consider only $m = \pm 1$ as the integer-valued invariants for $m$ near $\pm 1$ are identical to those of $m = \pm 1$ by continuity. 
 
 Since we now know that $W_n^d$ is integer-valued as we demonstrated the presence of a spectral gap (technically, we have not shown this for $|\xi|$ close to $0$ yet; this is done below) and we also showed that it was independent of $0<\eps<\eps_0$, any contribution to \eqref{eq:WnTn} that is small as $\eps\to0$ may therefore be safely ignored.  By construction, the invariant of $H_{4\times4}$ is therefore asymptotically the only contribution to the invariant of interest in the integral \eqref{eq:WnTn} for values of $|\xi|$ close to $l$. 
 
 Continuous deformations show that we may remove $K_{\pm \ell}$ from $H_{4\times 4}$ and $\kappa$ from $H_{2\times 2}$ in the invariant computations. We therefore replace the $2\times 2$ and $4\times 4$ Hamiltonians respectively with
\begin{align*}
    &H_{2\times 2} = (\ell - |\xi|)\sigma_3 -mc \varepsilon^{2\ell} \hat\xi^{2m\ell}\cdot\sigma \\
    &H_{4\times 4} = \begin{pmatrix} \ell + \xi\sigma  &  \varepsilon(\varepsilon \hat\xi_{-m})^{2\ell-1}cB_{m}^* \\
      \varepsilon(\varepsilon \hat\xi_m)^{2\ell-1} cB_m & -\ell + \xi\sigma \end{pmatrix}.
\end{align*}
We now need to elucidate one point. The reduction from the $4\times4$ to the $2\times2$ Hamiltonians in \eqref{eq:rotationxi} depends on $\xi$ and it is therefore not clear a priori that the invariant for the $4\times4$ systems can be computed using the invariant for the $2\times2$ system. We need to verify the invariant for $H_{2\times 2}$ is the same as the invariant of $H_{4\times 4}$ in the small $\eps$ limit. To do so, we use the computation of invariants using the connections rather than the curvatures knowing that these two computations are related by a simple application of the Stoke's theorem (since curvature is defined as the exterior derivative of the connection) \cite{Hughes_2013,fruchart_2013}.

Defining $V = (v_1,v_2)$ and $(\varphi^\pm,\pm E)$ as the eigenpairs of $H_{2\times2}$, we find to leading order (in $\eps$) that
\begin{equation*}
    H_{4\times4} V\varphi^{\pm} \approx \pm E V\varphi^{\pm}.
\end{equation*}
In the computation of the bulk-difference invariant for $H_{4\times4}$ using an integral such as \eqref{eq:WnTn}, we can approximate the contributions for $|\xi|$ close to $\ell$ using $m = -1$ by
\begin{equation*}
\begin{split}
    \omega_{\ell}& = 2i\int_{\mathcal{C}_\ell} d(V\varphi^+,d (V\varphi^+)) 
    = 2i \int_{\mathcal{C}_\ell} d(\varphi^+,d\varphi^+) + d(\varphi^+, (V^*dV)\varphi^+)
\end{split}
\end{equation*}
where $C_\ell = \{ z \in \Cm: |z| \in [\ell-1/2,\ell+1/2]\}$. 
Now we wish to show the second term is $0$. That the second term above vanishes shows that the invariants, which may be computed as line integrals of connections instead of volume integrals of curvatures \cite{Hughes_2013,fruchart_2013}, are indeed identical. We find:
$
    V^*dV = \hat\xi d\widehat{\bar\xi}.
$
Hence
\begin{equation*}
    \int_{\mathcal{C}_\ell} d(\varphi^+,(V^*dV)\varphi^+) = \int_{\partial \mathcal{C}_\ell} \hat\xi d\overline{\hat\xi} = 0.
\end{equation*}
The final integral is zero as we integrate over two oppositely directed circles.

It finally remains to compute the invariant of the limiting $2\times2$ system.
We use the coordinates $\xi=e^{i\theta}(n+\eps^{2n} r)$ for $(r,\theta)\in\Rm\times(0,2\pi)$.
 The system becomes after dividing by $\eps^{2\ell}$, and to leading order $|\xi|=1+\eps^{2\ell}r$,
\[
    \left( \begin{matrix}
     -r & \alpha e^{-2\ell i\theta} \\ \alpha e^{2\ell i\theta}& r
\end{matrix} \right) \varphi   =  (-r\sigma_3 +\alpha \cos(2\ell\theta)\sigma_1+\alpha \sin (-2\ell\theta)\sigma_2)\varphi = E \varphi,
\]
for some constant $\alpha\not=0$. Consider the slightly more general family of Hamiltonians
\[
  H_p = \cos(p\theta+\phi) \sigma_1 + \sin(p\theta+\phi)\sigma_2 + \tau r \sigma_3.
\]
We are interested in the case $p = -2\ell$ and $\tau = -1$.
We find
\[
  \partial_1 H=\tau\sigma_3,\quad \partial_2 H =-p \sin (p\theta+\phi) \sigma_1 + p \cos (p\theta+\phi) \sigma_2 \] and \[
[\partial_1 H,\partial_2 H] = -2i \tau p(\sin(p\theta+\phi) \sigma_2 + \cos(p\theta+\phi) \sigma_1) 
\]
so that 
\[
  \int_0^{2\pi} {\rm tr} H [\partial_1 H,\partial_2 H] d\theta = -8\pi i p \tau.
\]
This result is independent of the phase $\phi$. Note that $H^2=(1+r^2)I$. We thus compute half the contribution to the ring's bulk-difference invariant $-\frac{p}{2}\sgn{\tau}$ when integrating
\[
   \frac i{2\pi} \dint \dfrac{-1}{8|H|^3} {\rm tr} H[\partial_1 H,\partial_2 H] dk.
\]
Since $\tau=-1$ above and $p = -2\ell$, we obtain
\[
\omega_\ell = -4\ell
\]
with the extra factor of $2$ coming from the fact this is a bulk-difference invariant (which we recall means computing the difference of the above integrals evaluated for $m=\pm1$). This concludes the computation of the contributions to the invariant coming from $|\xi|\sim l$ for $l\geq1$.

\medskip
{\bf The case $|\xi|$ small.}
It thus remains to compute the contribution to the invariant coming from $\xi\sim 0$. We do it when $n=1$ to slightly simplify notation. The generalization to arbitrary $n$ follows similar machinery from the previous case. We want to eliminate non-contributing terms and write an equation for the middle component of the system
\[
  \left(\begin{matrix} 1-E+\xi\cdot\sigma & \eps B_m^* & 0 \\ \eps B_m & \xi\cdot\sigma-E & \eps B_m^* \\ 0 & \eps B_m & -1-E+\xi\cdot\sigma \end{matrix} \right) \left(\begin{matrix}\psi_1 \\ \psi_0 \\ \psi_{-1}\end{matrix} \right)=0.
\]
For $\xi$ and $E$ close to $0$, all diagonal terms but the middle one are invertible. We obtain the result
\[
  \left( -  \left(\begin{matrix} \eps B_m^* \\ \eps B_m \end{matrix} \right)^*  \left(\begin{matrix} 1-E+\xi\cdot\sigma & 0\\ 0 & -1-E+\xi\cdot\sigma \end{matrix} \right)^{-1}  \left(\begin{matrix} \eps B_m^* \\ \eps B_m \end{matrix} \right) + \xi\cdot\sigma-E\right) \psi_0 =0.
\]
This opens a gap, so that we now officially know that $H_n$ has a spectral gap near $E=0$ for $\eps$ small enough since the only possible remaining obstruction was for $|\xi|$ small. The gap close to $\xi=0$ is well approximated by the system
\[
   (\xi\cdot\sigma - E -m\eps^2 \sigma_3 )\psi_0 =0.
\]
The invariant for such an operator is $-\text{sign}(m)\frac12$ so that the bulk-difference invariant equals $\omega_0 = 1$.

We thus obtain the two reduced systems of interest when two eigenvalues are close to the gap $E=0$. 
We find a contribution to the bulk-difference invariant equal to $\omega_0 = 1$ for the contribution close to $|\xi|=0$ and equal to $\omega_\ell = -4\ell $ for the contribution close to  $|\xi|=\ell$.  The topology of the projection onto the negative part of the energy of $H_n$ is thus given by the winding number (at energy $E=0$)
\[
 W_n^d = \sum_{\ell = 0}^n \omega_\ell = 1-4\sum_{\ell=1}^n \ell = 1-2n(n+1).
\]
For $n=1$ (the $3-$replica model) for example, we find $W_1=-3$.
\end{proof}

\subsection{Interface conductivity}
\label{sec:cond}

Let $n$ be fixed and $H=\Ht_n$ one of the above replica models or $H=\htwo$ the reduced $2\times2$ system defined in \eqref{eq:htwo}. Bulk-difference invariants (when $m$ is constant) were computed for these Hamiltonians in the preceding section. We already know from the discussion in the introduction (or indeed from the calculations in the preceding section) that the bulk-difference invariant for $\htwo$ is $W_0^d=1$.

The above bulk computations translate to a quantization of an interface conductivity $\sigma_I$ in \eqref{eq:sigmaI} when the now spatially varying mass term $m(y)$ has different signs as $|y|\to\infty$. We assume that $m(y)$ is smooth and is equal to $m_0>0$ for $y>y_0>0$ and equal to $-m_0$ for $y<-y_0$, where $m_0$ is sufficiently close to $1$ as discussed in the previous section so that gaps open for these values. This translates into a gap $|E|>E_0$ for the bulk Hamiltonian with $m=\pm m_0$. We use semiclassical calculus results derived in \cite{bal3} to show that there indeed is a well defined interface current, which is quantized according to the bulk invariants.

Many results on the correspondence between interface and bulk invariants are available in the literature both for time-dependent (Floquet) topological insulators or not; see for instance \cite{elbau2002equality,graf2018bulk,prodan2016bulk,sadel2017topological,volovik2003universe}.

We can also write 
$$H = D \cdot \gamma_0 + \eps \gamma(y)$$
for $\gamma_0 = \sigma \otimes I_n$, $\sigma = (\sigma_1,\sigma_2)$. For $y < -y_0$, $\gamma = \gamma_-$ and $y > y_0$, $\gamma = \gamma_+$. Here we use 
$$\gamma = \eps^{-1} (H-D \cdot \gamma_0).$$ 
Hence $\gamma(y)$ has constant coefficients away from the interface, and $\gamma(y)$ is smooth across the interface. We know the corresponding constant coefficient Hamiltonians have bulk gaps as shown in the previous section. This is the structure of PDEs considered in \cite{bal3}, which we will apply here to establish the bulk-interface correspondence for $H$. 

We define $P$ as a smooth non-decreasing function of $x \in \Rm$ such that $P(x)=0$ for $x\leq -x_0<0$ and $P(x)=1$ for $x\geq x_0$. We let $\varphi$ be a smooth function with $\varphi(u) = 0$ for $u < -E_1$ and $\varphi(u) = 1$ for $u > E_1$, where $0 < E_1 < E_0$ and $[-E_0,E_0]$ is within the bulk gap. We then define the edge conductivity:
\[
\sigma_I = \Tr\ i[H,P]\varphi'(H).
\] The main result of this section will then be the bulk-interface correspondence:
\begin{thm}
For the system defined as above and $\eps$ sufficiently small, we have the bulk-interface correspondence
\[
2\pi\sigma_I = -W_n^d.
\]
\end{thm}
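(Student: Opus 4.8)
### Proof Plan

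The plan is to reduce the bulk-interface correspondence to the general machinery developed in \cite{bal3} for PDEs of the form $H = D\cdot\gamma_0 + \varepsilon\gamma(y)$, where $\gamma(y)$ has constant coefficients $\gamma_\pm$ outside a compact interface region and varies smoothly across it. The excerpt has already arranged $H = \Ht_n$ (or $H = \htwo$) into exactly this form, and the previous section established that the two constant-coefficient bulk Hamiltonians (with $m = \pm m_0$) have a common spectral gap $|E| > E_0$ with $m_0$ close to $1$. So the first step is to verify that the structural hypotheses of the semiclassical framework in \cite{bal3} are met: self-adjointness on the natural domain, ellipticity of the principal symbol $D\cdot\gamma_0$ (here $\gamma_0 = \sigma\otimes I_n$ is a Dirac-type matrix whose square is $|\xi|^2 I$, so $(D\cdot\gamma_0)^2 = -\Delta\,I$ is elliptic), and the gap condition for the two limiting operators. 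Once these are in place, the results of \cite{bal3} directly give that $\sigma_I = \Tr\, i[H,P]\varphi'(H)$ is well-defined as an operator trace (the commutator $i[H,P]$ localizes near $x = x_0$ and $\varphi'(H)$ localizes in the bulk gap, so the product is trace-class), and that $2\pi\sigma_I$ equals a Fedosov–Hörmander-type index computable from the symbol.

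The second step is to identify that index with $-W_n^d$. Here I would invoke the bulk-interface correspondence theorem of \cite{bal3} in the form that equates $2\pi\sigma_I$ with the bulk-difference invariant $W^- - W^+$ built by gluing the two half-plane projectors $\Pi_\pm(\xi)$ onto the negative spectrum along the circle at infinity — which is precisely how $W_n^d = W_n^- - W_n^+$ was defined in Section \ref{sec:bulk}. The key point to check is that the gluing construction used in \cite{bal3} to produce the compact-manifold (sphere) invariant from the interface data coincides with the gluing construction used in Section \ref{sec:bulk} to define $W_n^d$; this requires that $\Pi_\pm(\xi)$ agree as $|\xi|\to\infty$ (already noted in the excerpt: the projectors are independent of $\varepsilon$ and $m$ at infinity, since the $\varepsilon B$ coupling and the $\pm m_0$ mass term are lower order compared to $D\cdot\gamma_0$), so the glued projector on $\Sm^2$ is the same object in both constructions. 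The sign convention — whether the correspondence reads $2\pi\sigma_I = W^d$ or $-W^d$ — must be pinned down by matching the orientation conventions of \cite{bal3} (direction of $P$ increasing, sign in $\varphi'$, orientation of the sphere) against those used in the Kubo formula \eqref{eq:T}; the statement asserts the minus sign, consistent with the $2\times2$ prototype where $2\pi\sigma_I = -\tfrac12(\sgn m(y_0) - \sgn m(-y_0))$ and $W_0^d = 1$ give $2\pi\sigma_I = -W_0^d$ when $m(\pm y_0) = \pm m_0$.

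The third step handles the $\varepsilon$-dependence explicitly. Unlike the standard setup, $B = B_\varepsilon(y) = \tilde B(\varepsilon y)$ depends on $\varepsilon$, and the interface region has width $O(\varepsilon^{-1})$ in rescaled units, so one cannot simply cite a fixed-operator theorem. I would argue that for each fixed $\varepsilon \in (0,\varepsilon_0)$ the operator $H$ still falls in the admissible class (the interface region is still compact, just large), so the correspondence $2\pi\sigma_I = -W_n^d(\varepsilon)$ holds for each such $\varepsilon$; then, since $W_n^d(\varepsilon)$ is integer-valued and was shown in the previous theorem to be constant $= 1 - 2n(n+1)$ for all $0 < \varepsilon < \varepsilon_0$, the conclusion follows for all sufficiently small $\varepsilon$. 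The main obstacle I anticipate is precisely this $\varepsilon$-uniformity: confirming that the trace-class and index-stability estimates from \cite{bal3} apply with constants that, while possibly $\varepsilon$-dependent, are finite for each fixed $\varepsilon$, and that the semiclassical regularization (Helffer–Sjöstrand representation of $\varphi'(H)$ combined with resolvent decay away from the gap and spatial locality away from the interface) does not degenerate — this should follow from the gap bound $|E| > E_0$ being $\varepsilon$-independent, but it needs to be stated carefully. A secondary subtlety is verifying that $\gamma_0 = \sigma\otimes I_n$ rather than a $2\times2$ Dirac matrix still fits the hypotheses of \cite{bal3}, i.e. that the Fedosov–Hörmander index theory there accommodates the enlarged internal $\Cm^{2(2n+1)}$ fiber; this is a matter of checking that nothing in that reference's proof used the minimality of the matrix size.
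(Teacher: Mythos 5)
Your proposal follows essentially the same route as the paper: both reduce the statement to verifying the hypotheses of the quantization result (Proposition 4.7) and the bulk-interface correspondence (Corollary 4.15) of \cite{bal3} for $H = D\cdot\gamma_0+\eps\gamma(y)$, with the bulk gap and the constancy of the integer $W_n^d$ in $\eps$ and $m_0$ supplied by the preceding bulk theorem. The one technical condition the paper singles out as non-trivial --- that $(I+H^2)^{-1}$ is a Weyl operator with symbol in $S^{-2}$, checked via a parametrix plus Beals'-criterion argument --- is what your trace-class and resolvent-decay remarks gesture at, so there is no substantive difference in approach.
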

\begin{proof}
The proof relies on verifying the conditions of Proposition 4.7 (quantization of $\sigma_I$) and Corollary 4.15 (bulk-interface correspondence) of \cite{bal3}. Most of the conditions are trivial, but we do need to verify conditions on our operator done through operator Weyl symbols. A matrix valued operator $A$ with values in $\mathbb{M}_{2(2n+1)}$ ($2(2n+1)\times2(2n+1)$ matrices) can be represented in terms of its Weyl symbol $a(x,\xi) \in \mathcal{S}'(\Rm^2\times\Rm^2; \mathbb{M}_{2(2n+1)})$:
\begin{align*}
    &A = \Weyl(a) ,\qquad 
    \Weyl(a)\psi(x) = \frac{1}{(2\pi )^2}\int_{\Rm^2\times\Rm^2} e^{i(x-y)\cdot \xi}a(\frac{x+y}{2},\xi)\psi(y)dyd\xi.
\end{align*}
To apply Proposition 4.7 of \cite{bal3} to obtain quantization of $\sigma_I$, we first need to show $(I+H^2)^{-1}$ has Weyl symbol with appropriate decay. Let $a$ be the symbol of $I+H^2$. We must show
$$(I + H^2)^{-1} = \Weyl(\tilde a)$$
for some $\tilde a \in S^{-2}$. Here $S^m$ is defined as the Fréchet space of functions  satisfying
\begin{equation}
\label{e:symbol_decay}
|\partial_{(\alpha,\beta)}a(x,\xi)| \leq C_{\alpha,\beta} \langle \xi \rangle^{m-\beta}.
\end{equation}
To verify the form of $(I+H^2)^{-1}$, we refer to the argument from Equation (8.10) in \cite{sjostrand}  and the application of Beals' criteria (Proposition $8.3$ in \cite{sjostrand}) to verify $\tilde a \in S^0$. This also provides us an operator $R = \Weyl(r)$ with $r \in S^{-1}$ such that
\[
A^{-1} = \Weyl(\tilde a) = \Weyl(a^{-1}) - \Weyl(\tilde a)R.
\]
$R$ is thus a smoothing operator, and the parametrix $\Weyl(a^{-1})$ dominates the decay estimates. Hence $\tilde a \in S^{-2}$ since $a^{-1} \in S^{-2}$.

The conditions (h1)-(h2) for Corollary 4.15 of \cite{bal3} are trivially verifiable using the symbol of $H$, $\xi \cdot \gamma_0 + \gamma(y)$. Thus we obtain the bulk-interface correspondence, which concludes the proof.
\end{proof}

In other words, we find that $2\pi\sigma_I(\Ht_n)=-1+2n(n+1)$. This is confirmed numerically for $n=1$ as well as for the central gap corresponding to $n=0$  (see Figure \ref{fig:edge}). For $n = 0$, recall that $\htwo = D\cdot\sigma - \eps^2m(y)\sigma_3$ after a bit of algebra, hence yielding a standard gapped Dirac edge state.
\begin{figure}[ht]
\centering
\begin{subfigure}{.45\textwidth}
\vspace{.9cm}
\centering
\includegraphics[width=1\linewidth]{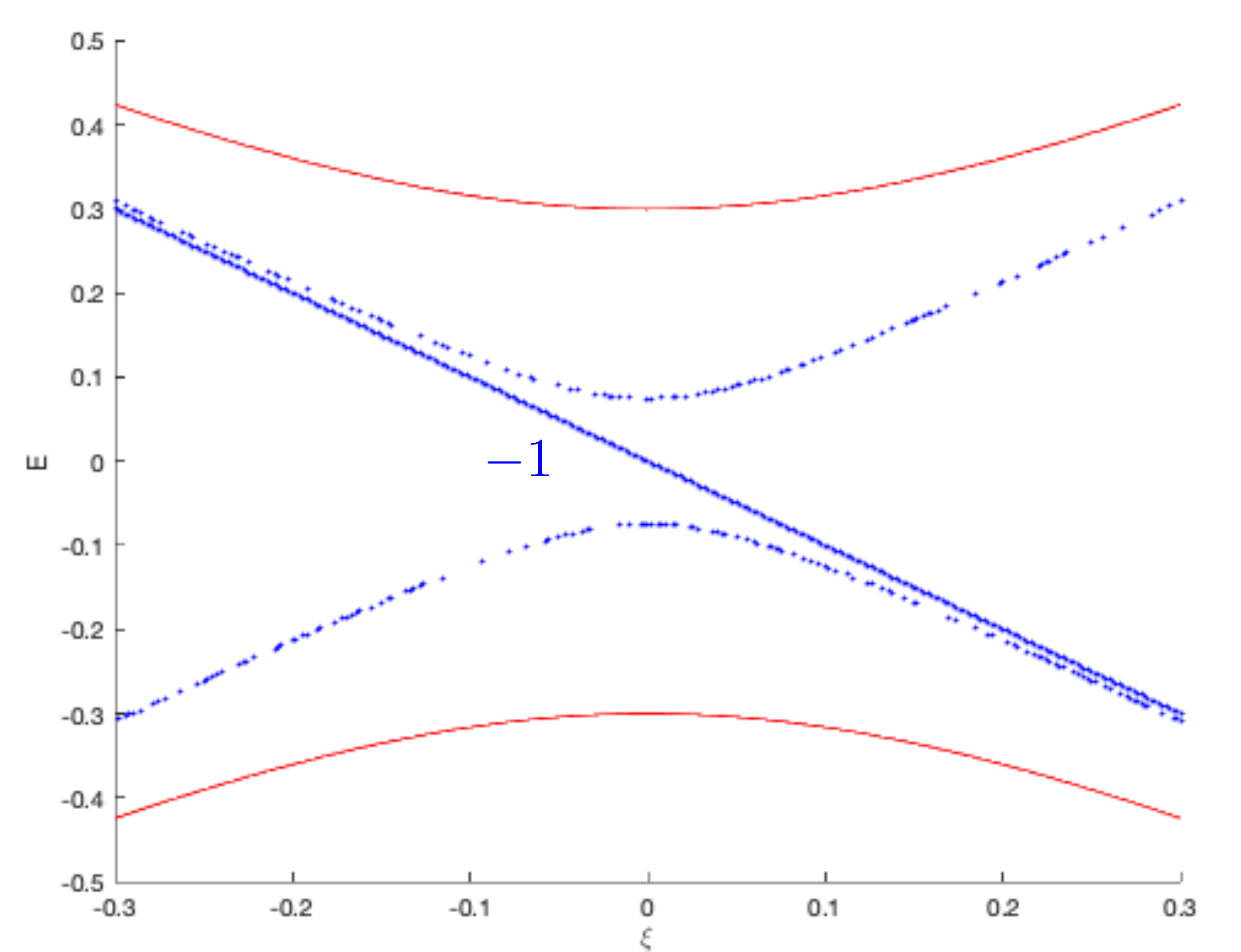}
\caption{Here we plot the edge state for $\htwo$ with $\Omega = 10/3$ in rescaled units. Two edges were included, and we plot only edge states localized to one side of the two-edge system.}
\end{subfigure}
\hspace{2mm}
\begin{subfigure}{.45\textwidth}
\centering
\includegraphics[width=1\linewidth]{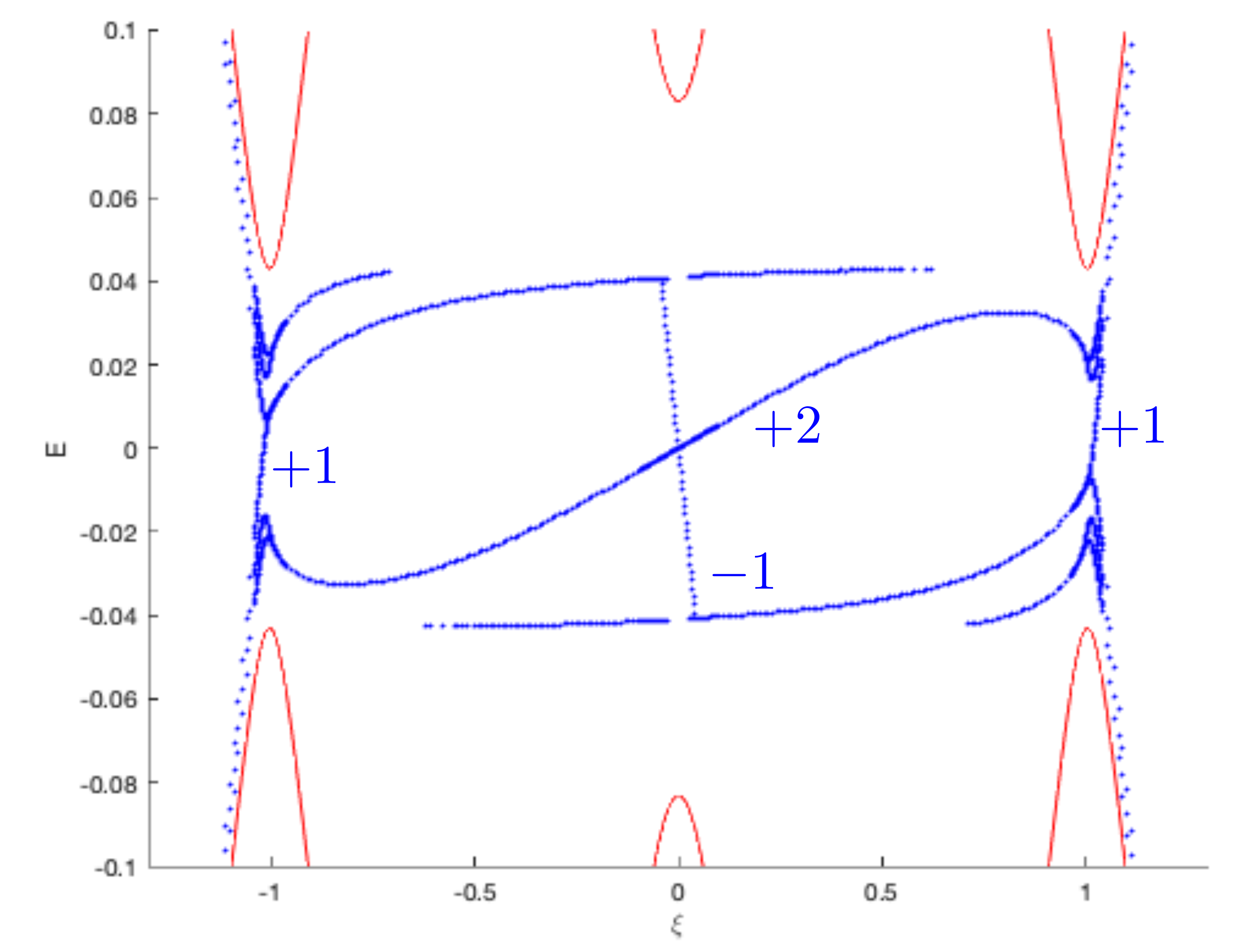}
\caption{Edge states for $\Ht_1$ for $\Omega = 10/3$ in rescaled units. }
\end{subfigure}
\caption{Replica model edge states plotted against bulk quasi-band structure.}
\label{fig:edge}
\end{figure}

We interpret these results as follows. For sufficiently small times (less than $\Omega=\eps^{-1}$ for a coupling $B$ of order $O(1)$ in the original variables), then the topology of the central gap is given by $n=0$ with a mode propagating from right to left along the $x-$axis with a topology given by $W_0^d=1$.

For longer times, heuristically less than $\Omega^{3}=\eps^{-3}$, then interactions between the propagating modes in the $n=1$ model (one mode propagating to the left as above with 4 modes propagating to the right) become significant. For sufficiently long times, an equilibrium takes place with a topology given by $W_1^d=-3$. For yet longer times, the modes of the $n=2$ model (the $5-$replica model) also participate in the transport (provided that $\varphi'$ is now so concentrated that it is supported in the spectral gap of the $5-$replica model). We then expect transport to respect a topology given by $W_2^d=-11$. 

The topology $W_0^d=1$ is likely to be of most interest practically. It should govern interface transport for times that are long but not too long (at most $\Omega=\eps^{-1}$); see also \cite{mciver2020light} for large pulsed irradiations that open sizeable gaps without damaging the underlying material.
For longer times, many physical phenomena (e.g., heat generation) not included in $\Ht_n$ may become more prevalent. If $\Ht_n$ remains a valid model, then we should observe a transition from the $W_0^d$ `topology' to the $W_1^d$ 'topology' in the presence of sufficient scattering among all propagating modes; see discussion in concluding section. 
\section{Averaging theory}
\label{sec:at}

This section considers a class of Hamiltonians for which explicit effective Hamiltonians may be derived by an averaging method in the high frequency regime.
Consider the unperturbed Hamiltonian $H_0=D\cdot\sigma$. A general electromagnetic time-dependent perturbation takes the form
\begin{equation}
\label{eq:peiels}
  H(t) = (D+A(t))\cdot\sigma + V(t)
\end{equation}
with $A(t)$ and $V(t)$ the magnetic and electric potentials, respectively. As in the preceding sections, the Hamiltonian $H(t)$ has no component along the gap-opening component $\sigma_3$.

The objective is to propose large, high-frequency modulations $(A,V)$ such that a non-trivial topology emerges for an effective Hamiltonian. As in the preceding section, the effective Hamiltonian dominates dynamics only for times that are not-too-large (small compared to the driving frequency, assumed to be large).

We still denote by $\Omega$ the driving frequency and by $\eps=\frac1{t_0\Omega}\ll1$ for $t_0\equiv1$ a reference time scale. We then construct the Hamiltonian
\begin{equation}\label{eq:Heps}
    H_\eps(t) = D\cdot\sigma + \frac1\eps \Big(f_1(\frac t\eps) \sigma_1 + f_0(\frac t\eps)v(y)\Big).
\end{equation}
This corresponds to the choice $A_1(t)=\frac1\eps f_1(\frac t\eps)$, $A_2(t)=0$, and $V(t)=\frac1\eps f_0(\frac t\eps)v(y)$. The functions $f_j(\tau)$ are chosen to be $t_0-$periodic so that $f_j(\frac t\eps)$ is $\Omega^{-1}-$periodic.

The factor $\frac1\eps$ reflects the fact that the rapid fluctuations need to be large to have an order $O(1)$ effect as $\eps\to0$. The term $f_1(t)$, a spatially constant rapidly oscillating magnetic potential creates the necessary twisting to acquire a non-trivial topology. The potential term $f_0(t)v(y)$ also requires an appropriate time evolution as we shall see while the spatial component $v(y)$ provides the edge confinement as the effective bulk topology depends on the sign of $v$. 

The last important ingredient in the above structure is that the fast Hamiltonian
\[
   H_{-1}(\tau) = f_1(\tau) \sigma_1 + f_0(\tau) v(y)
\]
is explicitly integrable since the commutator $[H_{-1}(\tau_1),H_{-1}(\tau_2)]=0$ for any times $\tau_1,\tau_2$. Indeed, let $F_j(\tau)$ be the antiderivatives of $f_j(\tau)$, i.e., $F_j'(\tau)=f_j(\tau)$ with $F_j(0)=F_j(t_0)=0$ assuming $\int_0^{t_0}f_j(\tau)d\tau=0$, as we do for the rest of the section.

The evolution associated to $H_{-1}(\tau)$ is given by the unitary
\[
  U_{-1}(\tau) = e^{-i \int_0^\tau H_{-1}(s)ds} = e^{ -i [F_1(\tau)\sigma_1 + F_0(\tau) v(y)]} = e^{-iF_0(\tau)v(y)}\big( \cos(F_1(\tau)) I -i \sin(F_1(\tau)\sigma_1 \big),
\]
where we used that 
\[
  e^{ia\sigma_1} =  \cos a\ I + i \sin a\  \sigma_1.
\]
Let $U_\eps(t)$ be the unitary evolution associated with $H_\eps(t)$, i.e., the solution of
\[
  i \partial_t U_\eps(t) = H_\eps(t) U_\eps(t)
\]
with $U_\eps(0)=I$. Factoring out the fast evolution, we may introduce
\[
   \tilde U_\eps(t) = U_{-1}^*(\frac t\eps) U_\eps(t)
\]
and obtain that 
\begin{equation}\label{eq:tildeH}
    i\partial_t \tilde U_\eps(t) = \tilde H_\eps(t) \tilde U_\eps(t),\qquad \tilde H_\eps(t) = U_{-1}^*(\frac t\eps) H_0 U_{-1}(\frac t\eps).
\end{equation}
Introducing $\tilde H(\tau)=\tilde H_\eps(\eps \tau)$ and $\tilde U(\tau)=\tilde U_\eps(\eps\tau)$, we obtain using shorthand $F_j = F_j(\tau)$
\begin{equation*}
    \begin{split}
        \tilde H(\tau) &= U_{-1}^*(\tau)H_0 U_{-1}(\tau) \\
        &= e^{iF_0v(y)}(\cos F_1 I + i \sin F_1 \sigma_1) D\cdot\sigma (\cos F_1 I - i \sin F_1\sigma_1) e^{-i F_0v(y)} \\
        &= \frac{1}{i}\partial_x \sigma_1 + (\cos F_1 I + i \sin F_1 \sigma_1)\sigma_2 (\cos F_1 I - i \sin F_1\sigma_1) (-F_0v'(y) + \frac{1}{i}\partial_y) \\
        &= \frac{1}{i} \partial_x \sigma_1 + (\cos F_1 I + i\sin F_1\sigma_1)^2 \sigma_2(-F_0v'(y) + \frac{1}{i}\partial_y) \\
        &= (\cos 2F_1 \sigma_2 - \sin 2F_1\sigma_3)(-F_0v'(y) + \frac{1}{i}\partial_y).
    \end{split}
\end{equation*}
 Thus we have
\begin{equation}\label{eq:tH}
    \tilde H(\tau) = \Big( \cos(2F_1(\tau))\sigma_2 - \sin(2F_1(\tau))\sigma_3\Big) \Big( \frac1i\partial_y - F_0(\tau)v'(y)\Big) +\frac1i \partial_x \sigma_1.
\end{equation}
Here, we used that $\sigma_1\sigma_2=i\sigma_3$. 
We now observe at the fast scale that
\[
   i\partial_\tau \tilde U (\tau) = \eps \tilde H(\tau) \tilde U(\tau).
\]
Since the influence is small, it is reasonable to expect that the main effect of $\tilde H$ is felt through its time average, at least up to moderately large times where additional effects may appear. To prove this, we use a two-scale averaging framework in a functional setting adapted to the differential operator $H_0$.

Let us introduce 
\begin{equation}\label{eq:aH}
    \aver{H} = \frac{1}{t_0}\int_0^{t_0} H(\tau) d\tau.
\end{equation}
\begin{prop}\label{prop:aver}
  Let $H(t)$ be a $t_0-$periodic Hamiltonian with a scale of Hilbert spaces $\mH_s$ such that $H(t)$ is (uniformly in time) bounded from $\mH_s$ to $\mH_{s+1}$ for $s=0,1$ and generates a unitary evolution in $\mH_0$. 
  Let $\aver{H}$ be the time averaged operator and $\varphi\in\mH_2$ be a sufficiently smooth initial condition. 
  
  Consider the evolutions 
  \[
     i\partial_t \psi_\eps = H(\frac t\eps) \psi_\eps,\qquad i\partial_t \psi = \aver{H}\psi
  \]
  both with initial conditions $\psi_\eps(0)=\psi(0)=\varphi$.
  
  Then we have the approximation
  \[
    \|\psi_\eps(t) - \psi(t) \|_{\mH_0} \leq C t\eps \sup_{0\leq s\leq t}\|\psi(s)\|_{\mH_2}
  \]
  for a constant $C=C(t_0,H)$ independent of $\eps$ and $t$.
\end{prop}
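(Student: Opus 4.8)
Sketch of a proof of Proposition~\ref{prop:aver}.

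\textbf{Strategy.} The plan is to run the classical two-scale averaging argument, but carefully in the scale of Hilbert spaces $\mathcal{H}_s$ so that the unbounded nature of $H(t)$ (it is a first-order differential operator, only bounded as a map $\mathcal{H}_{s+1}\to\mathcal{H}_s$) is absorbed by the regularity of the reference solution $\psi$. Introduce the zero-mean antiderivative $K(\tau) := \int_0^\tau \big(H(s)-\langle H\rangle\big)ds$, which is $t_0$-periodic (since $\langle H\rangle$ is exactly the mean), bounded uniformly in $\tau$ from $\mathcal{H}_{s+1}$ to $\mathcal{H}_s$, and satisfies $\eps\,\partial_t\big[K(t/\eps)\big] = H(t/\eps)-\langle H\rangle$. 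The point is that $K(t/\eps)$ is $O(1)$, not $O(\eps^{-1})$, so multiplying by $\eps$ gains a genuine power of $\eps$. I would then write the difference $w_\eps(t):=\psi_\eps(t)-\psi(t)$, derive the equation it satisfies, and use a Duhamel/energy estimate together with Gr\"onwall, the only subtlety being to keep the $\mathcal{H}_2\to\mathcal{H}_0$ loss under control by pairing each derivative loss with a factor supplied by $K$ acting on the smooth solution $\psi$.

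\textbf{Key steps.} First, set $w_\eps = \psi_\eps-\psi$. It solves $i\partial_t w_\eps = H(t/\eps)w_\eps + \big(H(t/\eps)-\langle H\rangle\big)\psi(t)$, $w_\eps(0)=0$. Second, remove the highly oscillatory forcing by a near-identity change of variables: set $\tilde w_\eps(t) := w_\eps(t) + i\eps\, K(t/\eps)\psi(t)$. Differentiating and using $i\partial_t\psi = \langle H\rangle\psi$ and $\eps\partial_t[K(t/\eps)] = H(t/\eps)-\langle H\rangle$, the oscillatory source term cancels and one is left with
\begin{equation*}
  i\partial_t \tilde w_\eps = H(t/\eps)\tilde w_\eps + \eps\, R_\eps(t),\qquad \tilde w_\eps(0) = i\eps\, K(0)\psi(0),
\end{equation*}
where $R_\eps(t)$ collects the remainder terms: schematically $R_\eps = -H(t/\eps)K(t/\eps)\psi + K(t/\eps)\langle H\rangle\psi$ (plus the piece from $\partial_t\psi$ already used). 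Third, estimate $R_\eps$ in $\mathcal{H}_0$: each summand is a product of at most two operators that each lose one derivative acting on $\psi$, so $\|R_\eps(t)\|_{\mathcal{H}_0} \lesssim \|\psi(t)\|_{\mathcal{H}_2}$, with constant depending only on $t_0$ and the uniform $\mathcal{H}_{s+1}\to\mathcal{H}_s$ bounds of $H$ (for $s=0,1$). Fourth, since $H(t/\eps)$ generates a unitary evolution $U_\eps(t,s)$ on $\mathcal{H}_0$, Duhamel gives $\|\tilde w_\eps(t)\|_{\mathcal{H}_0} \leq \eps\|K(0)\psi(0)\|_{\mathcal{H}_0} + \eps\int_0^t \|R_\eps(s)\|_{\mathcal{H}_0}\,ds \leq C\eps\, t \sup_{0\le s\le t}\|\psi(s)\|_{\mathcal{H}_2}$ (absorbing the initial term, bounded by $\eps\|\psi(0)\|_{\mathcal{H}_1}$, into the same expression for $t$ bounded below, and noting it is anyway $O(\eps)$). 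Finally, undo the change of variables: $\|w_\eps(t)\|_{\mathcal{H}_0} \leq \|\tilde w_\eps(t)\|_{\mathcal{H}_0} + \eps\|K(t/\eps)\psi(t)\|_{\mathcal{H}_0} \leq C\eps\, t\sup_{0\le s\le t}\|\psi(s)\|_{\mathcal{H}_2} + C\eps\|\psi(t)\|_{\mathcal{H}_1}$, and both terms are dominated by the claimed bound $C t\eps \sup_{0\le s\le t}\|\psi(s)\|_{\mathcal{H}_2}$.

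\textbf{Main obstacle.} The only genuinely delicate point is bookkeeping the loss of regularity: the naive Duhamel estimate on $w_\eps$ directly would force one to control $\|\psi_\eps\|$ in $\mathcal{H}_1$, which is not available uniformly, so the near-identity transformation trading $w_\eps$ for $\tilde w_\eps$ is essential — it moves the oscillatory forcing onto the \emph{smooth} solution $\psi$ (which lives in $\mathcal{H}_2$ by hypothesis and whose norm we are allowed to feature on the right-hand side) before any derivative loss is incurred. One should also check that $H(t/\eps)K(t/\eps)$ is a legitimate bounded operator $\mathcal{H}_2\to\mathcal{H}_0$, which follows from composing the two uniform one-derivative bounds; here periodicity and zero mean of $K$ are what keep all constants independent of $t$ and $\eps$. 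Everything else is a routine Gr\"onwall/Duhamel argument using unitarity of the evolution generated by $H(t/\eps)$ on $\mathcal{H}_0$.
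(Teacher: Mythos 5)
Your proof is correct and is essentially the paper's argument: your near-identity transformation $\tilde w_\eps = w_\eps + i\eps K(t/\eps)\psi$ is exactly the paper's two-scale Hilbert expansion with first corrector $\eps\psi_1 = -i\eps K(\tau)\psi_0$, and your remainder $R_\eps$ coincides with the paper's source $S_\eps = \eps i\bigl[K(t/\eps)\aver{H} - H(t/\eps)K(t/\eps)\bigr]\psi_0$, estimated the same way via the $\mH_2$ regularity of $\psi$, Duhamel, and unitarity of the exact evolution on $\mH_0$.
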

This shows that the averaging approximation holds up to times that are small compared to $\frac1\eps t_0$ times the supremum  $\sup_{0\leq s\leq t}\|\psi(s)\|_{\mH_2}$. This depends on the structure of the averaged Hamiltonian. For times of order $O(t_0)$, such a supremum is bounded by standard regularity theory so that the error in the quantum dynamics is proportional to $\eps$, or equivalently $\Omega^{-1}$. For longer times, this supremum may grow polynomially for some initial conditions $\psi$. We do not consider this well-studied problem and refer instead to \cite{maspero2017time} and its references for details.

\begin{proof}
  In a two-scale formalism, we replace the $\eps-$dependent problem formally up to $O(\eps^2)$  terms in the expansion of $\psi$ by
\[
  (\frac i\eps\partial_\tau + i\partial_t )(\psi_0+\eps\psi_1)= H(\tau)(\psi_0 + \eps\psi_1).
\]
Here we assume $\psi_j(t,\tau)$ is periodic in $\tau$.
Solving these equations in turns gives $\psi_0=\psi_0(t)$ and then 
\[
  i\partial_t \psi_0 = \aver{H} \psi_0 ,\qquad i\psi_1(t,\tau) = \Big(\dint_0^\tau (H(s)-\aver{H}) ds\Big)\psi_0(t) + \psi_{10}(t).
\]
We choose $\psi_{10}(t)=0$ and $\psi_0(0)=\varphi$ while $\psi_1(0)=0$. We can use these terms in a Hilbert expansion
\[
   i\partial_t (\psi_0(t)+\eps\psi_1(t,\frac t\eps) + \zeta_\eps ) = H(\frac t\eps) ( \psi_0(t) +\eps\psi_1(t,\frac t\eps) + \zeta_\eps)
\]
and obtain that
\[
  i\partial_t \zeta_\eps = H(\frac t\eps) \zeta_\eps + S_\eps
\]
where the source term is given by
\[
   S_\eps = \eps i \Big[ \dint_0^{\frac t\eps} (H(s)-\aver{H})ds \aver{H} - H(\frac t\eps)\dint_0^{\frac t\eps} (H(s)-\aver{H})ds  \Big] \psi_0.
\]
This is a term of order $O(\eps)$ provided that $\psi_0$ is sufficiently smooth. The integrals in time occur over an interval or size bounded by $t_0$ since $H$ is periodic. This provides an error estimate for $\zeta_\eps(t)$ as given in the proposition. The term $\eps\psi_1$ is bounded similarly.
%
\end{proof}
We now apply the above result to the operator in \eqref{eq:Heps}, for which we choose $\mH_s=H^s(\Rm^2)$ the standard Sobolev space of functions with $s$ derivatives in $L^2(\Rm^2)\equiv H^0(\Rm^2)$.
\begin{corollary}\label{cor:aver}
  Let $H_\eps(t)$ be the modulated operator given in \eqref{eq:Heps} and assume that $v'(y)$ is smooth and uniformly bounded on $\Rm$. The effective Hamiltonian is given by
  \begin{equation}\label{eq:averH}
      \tilde H = \frac 1i\partial_x \sigma_1 + Y \frac 1i\partial_y + M v'(y)
  \end{equation}
  with constant matrices $Y$ and $M$ given by
  \[
    Y = \aver{\cos (2F_1)}\sigma_2 + \aver{\sin(2F_1)}\sigma_3,\quad 
    M = -(\aver{\cos(2F_1)F_0}\sigma_2 + \aver{\sin(2F_1)F_0}\sigma_3).
  \]
  Let $\psi_\eps$ be the solution of 
  \[
    i\partial_t \psi_\eps = H_\eps \psi_\eps,\qquad \psi_\eps(0) = \varphi
  \]
  for $\varphi\in \mH_2$.
  Define $\psi$ as the solution to the effective evolution
  \[
     i\partial_t\psi = \tilde H\psi,\qquad \psi(0)=\varphi.
  \]
  Then we have 
  \[
    \|\psi_\eps(t) - U_{-1}(\frac t\eps) \psi(t)\|_{\mH_0} \leq C t\eps \sup_{0\leq s\leq t}\|\psi(s)\|_{\mH_2}.
  \]
\end{corollary}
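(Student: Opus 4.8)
The plan is to reduce Corollary \ref{cor:aver} to a direct application of Proposition \ref{prop:aver} with the explicit rescaled Hamiltonian $\tilde H(\tau)$ computed in \eqref{eq:tH}. Recall from \eqref{eq:tildeH} that $\psi_\eps(t) = U_{-1}(\frac t\eps)\tilde U_\eps(t)\varphi$, where $\tilde U_\eps$ solves $i\partial_t \tilde U_\eps = \tilde H_\eps(t)\tilde U_\eps$ and $\tilde H_\eps(t) = U_{-1}^*(\frac t\eps)H_0 U_{-1}(\frac t\eps) = \tilde H(\frac t\eps)$ with $\tilde H(\tau)$ the $t_0$-periodic operator in \eqref{eq:tH}. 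So $U_{-1}^*(\frac t\eps)\psi_\eps(t)$ is precisely the solution of the $\eps$-dependent equation $i\partial_t w_\eps = \tilde H(\frac t\eps) w_\eps$ with $w_\eps(0)=\varphi$ — which is the object Proposition \ref{prop:aver} bounds against the solution of $i\partial_t \psi = \aver{\tilde H}\psi$. It then remains to (i) verify the hypotheses of Proposition \ref{prop:aver} for $\tilde H(\tau)$, and (ii) check that $\aver{\tilde H} = \tilde H$, the operator displayed in \eqref{eq:averH}.

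For step (ii), the computation is immediate from \eqref{eq:tH}: averaging over $\tau\in[0,t_0)$ (with the $\frac1{t_0}$ normalization of \eqref{eq:aH}) gives $\aver{\cos(2F_1)}\sigma_2 - \aver{\sin(2F_1)}\sigma_3$ multiplying $\frac1i\partial_y$, which I would collect into the matrix $Y$ (note the sign bookkeeping: $-\sin(2F_1)\sigma_3$ averages to $-\aver{\sin(2F_1)}\sigma_3$, and the stated $Y$ reads $+\aver{\sin(2F_1)}\sigma_3$, so the definition of $Y$ in the corollary must be absorbing a sign — I would simply state $Y$ as the average of $\cos(2F_1)\sigma_2-\sin(2F_1)\sigma_3$ and match constants), together with the $-F_0(\tau)v'(y)$ cross term producing $M v'(y)$ with $M = -\aver{(\cos(2F_1)\sigma_2-\sin(2F_1)\sigma_3)F_0}$, plus the $\tau$-independent term $\frac1i\partial_x\sigma_1$ which is unaffected by averaging. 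This is the routine part.

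For step (i), I take $\mH_s = H^s(\Rm^2;\Cm^2)$. The operator $\tilde H(\tau)$ is, for each fixed $\tau$, a first-order differential operator: its principal part is $(\cos 2F_1\,\sigma_2 - \sin 2F_1\,\sigma_3)\frac1i\partial_y + \frac1i\partial_x\sigma_1$ with smooth bounded (indeed constant-in-$y$) matrix coefficients, and its zeroth-order part is $-F_0(\tau)v'(y)(\cos 2F_1\,\sigma_2 - \sin 2F_1\,\sigma_3)$, bounded because $v'$ is assumed smooth and uniformly bounded and $F_0,F_1$ are continuous periodic hence bounded. So $\tilde H(\tau):\mH_{s+1}\to\mH_s$ is bounded uniformly in $\tau$ for $s=0,1$, as required. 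For the unitary evolution on $\mH_0 = L^2$: since $U_{-1}(\tau)$ is unitary on $L^2$ and $H_0=D\cdot\sigma$ is self-adjoint, $\tilde H(\tau) = U_{-1}^*(\tau)H_0 U_{-1}(\tau)$ is self-adjoint on the domain $U_{-1}^*(\tau)\,\mathrm{Dom}(H_0)$, and the associated non-autonomous evolution is unitary (equivalently, $\tilde U_\eps(t) = U_{-1}^*(\frac t\eps)U_\eps(t)$ is a product of unitaries). The main obstacle — such as it is — is this last point: one must confirm that the time-dependent self-adjoint family $\tilde H(\tau)$ generates a genuine unitary propagator on $L^2$ preserving $\mH_2$, so that Proposition \ref{prop:aver} applies and the term $\sup_{0\le s\le t}\|\psi(s)\|_{\mH_2}$ on the right is finite; this follows from standard theory for $\tilde H$ (a first-order symmetric hyperbolic system with smooth bounded coefficients) but is worth citing. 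Once (i) and (ii) are in place, Proposition \ref{prop:aver} gives $\|U_{-1}^*(\frac t\eps)\psi_\eps(t) - \psi(t)\|_{\mH_0} \le Ct\eps\sup_{0\le s\le t}\|\psi(s)\|_{\mH_2}$, and applying the $L^2$-isometry $U_{-1}(\frac t\eps)$ to both arguments yields the stated estimate $\|\psi_\eps(t) - U_{-1}(\frac t\eps)\psi(t)\|_{\mH_0}\le Ct\eps\sup_{0\le s\le t}\|\psi(s)\|_{\mH_2}$.
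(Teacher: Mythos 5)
Your proposal is correct and takes essentially the same route as the paper: the paper's proof likewise treats the statement as a direct application of Proposition \ref{prop:aver} to the conjugated $t_0$-periodic Hamiltonian $\tilde H(\tau)$ of \eqref{eq:tH}, combined with the unitarity of $U_{-1}(\frac t\eps)$ on $\mH_0$, with your verification of the hypotheses and of $\aver{\tilde H}$ merely spelling out what the paper leaves implicit. Your remark about the $\sigma_3$ sign in $Y$ and $M$ correctly flags a bookkeeping discrepancy in the stated formulas relative to \eqref{eq:tH}, not a gap in the argument.
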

\begin{proof}
This is a direct corollary of the preceding result and the fact that $U_{-1}$ is unitary on $\mH_0$. That $\sup_{0\leq s\leq t}\|\psi(s)\|_{\mH_2}$ is uniformly bounded at least for times of order $O(1)$ is a standard regularity result. In general, we expect such a supremum to grow as a function of $t$ although we do not consider the details here; see \cite{maspero2017time}.
\end{proof}

Let us consider the case with $F_1(\tau)$ and $F_0(\tau)$ odd with respect to $\frac12 t_0$. Then 
\[
  \aver{\sin(2F_1)} = \aver{\cos(2F_1)F_0)} =0
\]
by oddness. Defining
\[
  h_y = \aver{\cos(2F_1)} \quad \mbox{ and } \quad m(y) = -\aver{\sin(2F_1)F_0)} v'(y),
\]
we obtain the effective Hamiltonian
\[
  \tilde H = \frac1i \partial_x \sigma_1 + h_y \frac1i \partial_y \sigma_2 + m(y) \sigma_3
\]
which has a nontrivial topology when $h_y>0$, say, and $m(y)$ is bounded away from $0$ away from $y=0$ and has different signs as $y\to\pm\infty$.

More generally, let us define a matrix $B$ with entries
\[
 b_{11}= \aver{\cos(2F_1)}, \ b_{12}= -\aver{\cos(2F_1)F_0)},\ b_{21}= \aver{\sin(2F_1)},\ b_{22} = -\aver{\sin(2F_1)F_0)},
\]
and assume that $B$ has non-vanishing determinant. Let us also assume to simplify that $m'(y)$ is continuous and non-vanishing so that it has constant sign. Then we find that the interface conductivity (invariant) is given by 
\[
 2\pi \sigma_I = - {\rm sgn} ({\rm Det} B)\  {\rm sgn} (m'(0)). 
\]
These formulas are obtained as we did in earlier section. We leave the details to the reader.

\medskip

The above results show that the local density $|\psi_\eps|^2(t,x)$ of $\psi_\eps(t)$ is accurately described by that of $\psi(t)$, the solution of a topologically non-trivial Hamiltonian dynamics. Indeed, the unitary $U_{-1}(\tau)$ is locally unitary in the sense that $|U_{-1}(\tau)\psi|^2=|\psi|^2$ for any two-spinor $\psi$. However, such an approximation is a priori valid only over times that are short compared to the driving frequency $\Omega$ (or in proper units, $\Omega|H|^{-2}$ as may be inferred from the proof of the above proposition, with $|H|$ a frequency quantifying the strength of the Hamiltonian $H$).


\section{Conclusions}
\label{sec:conclu}

The topology of a material, as is the case for a manifold, concerns its global structure. It is immune to continuous deformations by construction and this makes topological invariants useful in practice when they can be associated with physical behaviors. In topological insulators, the edge conductivity \eqref{eq:sigmaI}, characterizing global properties of transport along an interface between insulators, is one such physically relevant invariant.

In a model such as \eqref{eq:H2x2}, the topology of the model is that of the vortex $\xi\cdot\sigma$ in momentum space, characterized by the winding of $\frac{\xi_1+i\xi_2}{|\xi|}$ around a `circle at infinity'. It is this behavior at infinity, combined with the behavior at infinity of the mass term $m(y)$ that characterizes the quantized values of $\sigma_I$ in \eqref{eq:valsig}. While such infinite domains are unrealistic but convenient in many modelings and applications, here they are central to the definition of the topology and a modeling choice we make.

Physically, $\sigma_I$ non vanishing indicates that transport along the edge has to occur. However, it does not fully describe edge transport. Let us assume that in a given energy range within the bulk band-gap, $m$ modes are allowed to propagate along the $x$ axis in the positive direction and $n$ modes in the negative one. Then $2\pi\sigma_I=m-n$ in \eqref{eq:valsig}. Neither $m$ nor $n$ are topologically protected separately. In the presence of minimal coupling among the modes, then $m$ modes will propagate rightward and $n$ modes leftward independently of what $\sigma_I$ indicates. However, in the presence of strong coupling (or equivalently over long times), then (Anderson) localization effects prevent ${\rm min}(m,n)$ from propagating, and asymptotically in the strong coupling regime, only $m-n$ modes propagate; see \cite[Theorem 6.2]{B-EdgeStates-2018} in the context of \eqref{eq:H2x2}, where it is shown that propagation across a highly heterogeneous slab asymptotically results in $m-n$ transmitting modes and ${\rm min}(m,n)$ totally reflected modes. Topology alone cannot protect against backscattering. It rather protects against the localization of exactly $2\pi\sigma_I=m-n$ modes. 

This paper analyzes how much of the above picture remains valid in the context of time-periodically driven materials such as graphene. As we mentioned in the introduction, opening a gap (with $m(y)$ large enough so that $\varphi'(H)$ has enough bandwidth) is a non-trivial task. Provided that the electromagnetic drive does not heat the material too rapidly, the Hamiltonian description of the light-matter interaction in \eqref{eq:EMH} is reasonable. In such a context, which needs to be posed on an infinite domain $\Rm^2$ if we want to model the presence of (non-periodic) impurities (since $\sigma_I$ is independent of the presence of a large class of impurities \cite{bal2}), the Hamiltonian $H(t)$ is locally (in time) trivial in the sense that it does not display any spectral gap, and the associated unitary evolution $U(t)$ also does not display any spectral gap at any time. It therefore seems difficult to define 
an explicit effective Hamiltonian as well as any topology based on a three dimensional winding number \cite{rudner2013}.

Instead, what we show is that emergent topologically non-trivial effective Hamiltonians appear at different time scales for times that are long compared to the forcing period $T$. In the absence of scale separation, the coupling between the replica levels (see Fig.\ref{fig:gaps}(a)) may be arbitrarily complicated and results in complex transport patterns; see \cite[Chapter 5]{fruchart:tel-01398614} for relevant numerical simulations. The scale separation $\Omega\gg1$ allows for the perturbation expansion considered in sections 2 and 4. We showed that the unitary evolution $U(t)$ was well approximated by $U_n(t)$ based on the $n-$replica model up to times of order $\Omega^{n+1}$. We also showed an approximation of $U(t)$ by an evolution $U_0(t)$ valid up to times of order $\Omega^{2}$ provided the initial condition is sufficiently smooth. Associated to these evolutions $U_n$ are a sequence of gapped effective Hamiltonians $H_n$. Even though $U_n$ converges to $U$ in operator norm over increasing ranges of time, the interface conductivities $\sigma_I(H_n)$ diverge as $n$ increases and no $\sigma_I$ can be assigned to the time-dependent Hamiltonian $H(t)$.

As a result, we obtain a sequence of effective conductivities $2\pi\sigma_I(H_0)=-1$, $2\pi\sigma_I(H_1)=3$, and so on, which reflects the topology that may be perceived at different time scales. Let us comment on the two panels in Fig. \ref{fig:edge}. Let us consider a density of states given by $\varphi'(H_0)$ concentrated in the central gap of the left panel in Fig. \ref{fig:edge}. For times of order $O(1)$, these wave packets evolve according to that dispersion relation with one more mode moving left-ward (with a negative group velocity) than right-ward. As time increases and in the presence of coupling among modes in that energy range (as shown e.g. in \cite{B-EdgeStates-2018} for the Dirac operator \eqref{eq:H2x2}), the other modes that are compatible energetically become populated until an equilibrium given by the conductivity associated to $H_1$ is reached, with $2\pi\sigma_I(H_1)=3$ more modes moving right-ward than left-ward. Higher-order conductivities may then be relevant for even longer time scales (of order $\eps^{-n+1}$ for the $n-$replica model) provided that the considered wavepackets live in the $n$th bulk gap (of width $\eps^{2n}$ for $n\geq1$). Heuristically, the system undergoes a cascade of phase transitions from $2\pi\sigma_I(H_j)$ to $2\pi\sigma_I(H_{j+1})$ for appropriately confined wavepackets until the Hamiltonian description no longer holds because of, e.g, dissipative effects.

In the last section, we considered a different model where a direct averaged effective Hamiltonian may be explicitly computed. In the high-frequency, high-amplitude regime, we show at least theoretically that large gaps may be opened in graphene over time scales that are (i) not-too-long compared to $\Omega$ and (ii) small enough that the dissipation-less Hamiltonian models remains relevant. The main feature that allowed us to obtain an effective Hamiltonian explicitly was to use a fast driving force that involves commuting Hamiltonians (i.e., not circularly polarized light). We stress that, here as in the preceding sections, the topology is associated to an effective Hamiltonian, whose validity is demonstrated only over not-too-long times.

\section*{Acknowledgments}

We would like to thank Michel Fruchart for enlightening discussions on the topic of FTIs. This research was partially supported by the National Science Foundation, Grants DMS-1908736 and EFMA-1641100 and by the Office of Naval Research, Grant N00014-17-1-2096.

\bibliographystyle{abbrv}
\bibliography{fti}

\end{document}